\newtheorem{theorem}{Theorem}
\newtheorem{lemma}[theorem]{Lemma}
\newtheorem{corollary}[theorem]{Corollary}
\newtheorem{proposition}[theorem]{Proposition}
\newtheorem{definition}[theorem]{Definition}
\theoremstyle{definition}
\newtheorem{example}[theorem]{Example}
\newtheorem{remark}[theorem]{Remark}
\newtheorem{construction}[theorem]{Construction}
\renewcommand{\epsilon}{\varepsilon}
\newcommand{\N}{\mathbb{N}}
\newcommand{\C}{\mathbb{C}}
\newcommand{\R}{\mathbb{R}}
\newcommand{\be}{\begin{eqnarray}}
\newcommand{\ee}{\end{eqnarray}}
\newcommand{\ben}{\begin{enumerate}}
\newcommand{\een}{\end{enumerate}}
\newcommand{\ba}{\begin{array}}
\newcommand{\ea}{\end{array}}
\newcommand{\mc}{\mathcal}
\newcommand{\Her}{\mathrm{Her}}
\def\p@subsection{}
\def\p@subsubsection{}
\let\originalleft\left
\let\originalright\right
\renewcommand{\left}{\mathopen{}\mathclose\bgroup\originalleft}
\renewcommand{\right}{\aftergroup\egroup\originalright}
\newcommand\xqed[1]{%
  \leavevmode\unskip\penalty9999 \hbox{}\nobreak\hfill
  \quad\hbox{#1}}
\newcommand\demo{\xqed{$\triangle$}}
\begin{document} 

\title{
Quantum magic squares: dilations and  their limitations
}

\author{Gemma De las Cuevas}
\address{Institute for Theoretical Physics, Technikerstr.\ 21a,  A-6020 Innsbruck, Austria}

\author{Tom Drescher}
\address{Department of Mathematics, Technikerstr.\ 13,  A-6020 Innsbruck, Austria}

\author{Tim Netzer}
\address{Department of Mathematics, Technikerstr.\ 13,  A-6020 Innsbruck, Austria}

\date{\today}

\begin{abstract}
Quantum permutation matrices and quantum magic squares are generalizations of permutation matrices and magic squares, where the entries are no longer numbers but elements from arbitrary (non-commutative) algebras. The famous Birkhoff--von Neumann Theorem characterizes magic squares as convex combinations of permutation matrices. In the non-commutative case, the corresponding question is: Does every quantum magic square belong to the matrix convex hull of quantum permutation matrices? That is, does every quantum magic square dilate to a quantum permutation matrix? Here we show that this is false even in the simplest non-commutative case. We also classify the quantum magic squares that dilate to a quantum permutation matrix with commuting entries, and prove a quantitative lower bound on the diameter of this set. Finally, we conclude that not all Arveson extreme points of the free spectrahedron of quantum magic squares are quantum permutation matrices.
\end{abstract}

\maketitle

\tableofcontents

%%=====================================================
\section{Introduction}
\label{sec:intro}

A magic square is a square of nonnegative numbers with the property that every row and column sums to the same value. (Often it is additionally required that the diagonals also sum to this value, but we will not require this here).
If the number they sum to is one, magic squares are also called doubly stochastic matrices.  Throughout this work we will use the words magic square and doubly stochastic matrix interchangeably, i.e.\ we always assume  that all rows and columns of our matrices sum to one.

The Birkhoff--von Neumann Theorem  is a central result in this context, which says that  every magic square is a convex combination of permutation matrices---see for example \cite{Ba02}. Furthermore, the permutation matrices constitute precisely  the vertices of the polytope of all magic squares.  This result is a beautiful and useful characterization of magic squares, with applications in the theory of Markov chains, matrix approximations, the graph isomorphism problem, market design and many more---see for example \cite{Bru87, Bud09}.

Several generalizations of magic squares and their corresponding Birkhoff--von Neumann Theorem have been proposed in the literature. 
One such generalization are unital quantum channels, that is, completely positive and trace preserving linear maps which map the identity to itself. 
It is easy to see that (the diagonal elements of) unital quantum channels that map diagonal matrices to diagonal matrices correspond to magic squares.
A version of the Birkhoff--von Neumann  Theorem for unital quantum channels has been shown to fail in  \cite{Lan93}, and an asymptotic version thereof has also been shown to fail  in  \cite{Haa11}; see also \cite{Me09b} for a detailed analysis.

In this work we consider a different generalization of magic squares, which is  inspired by non-commutative semialgebraic geometry and free convexity (see e.g.\  \cite{Ne19}). 
Namely, we define magic squares over an algebra $\mc{A}$ as matrices containing positive elements from $\mc{A}$, and such that every row and column sums to the identity element in $\mc{A}$. We call such matrices \emph{quantum magic squares}.
If, additionally, every element is a projector, then the magic square is called a \emph{quantum permutation matrix} (also called magic unitary in  \cite{Lu17, Pa19}).
For readers familiar with quantum information theory, 
a magic square over the algebra  of  complex matrices of size $s\times s$,  ${\rm Mat}_s(\mathbb{C})$,  
contains a positive operator valued measurement (POVM) in each row and in  each column, 
and a quantum permutation matrix contains a projective measurement in each row and column. 
If the algebra $\mc A$ is simply the complex numbers $\mathbb{C}$, 
then quantum  magic squares and quantum permutation matrices coincide 
 with magic squares and permutation matrices, respectively.
We refer to this special case as the ``classical" case,  intuitively because quantum behaviour is associated to non-commutativity, and thereby to non-commutative algebras such as ${\rm Mat}_s(\mathbb{C})$.
Note that quantum permutation matrices have been studied in several contexts: as $*$-representations of the quantum permutation group \cite{Ba07}, 
 to characterize quantum isomorphism of graphs \cite{At19} 
and to represent perfect quantum strategies for non-local games \cite{Lu17, Pa19}.

To be more specific, in this work we consider the free (i.e.\ non-commutative) semialgebraic sets of all magic squares  and of all quantum permutation matrices over matrix algebras. 
The first 
level of these free semialgebraic sets corresponds to the classical case: we recover the sets of magic squares and of permutation matrices, respectively.
We then investigate 
 the \emph{matrix convex hull} 
of the quantum permutation  matrices. 
This is a special notion of convexity which is well-adapted to non-commutative sets, as it relates the different matrix levels by isometric conjugations. 
In a more operator theoretic formulation, we investigate which quantum magic squares 
admit a \emph{dilation} to a quantum permutation matrix. 
While a single POVM  can be dilated to a projective measurement via Stinespring's Dilation Theorem \cite{Pau02}, this does not imply that a quantum magic square  
can be dilated to a quantum permutation matrix, since the former contains a POVM in every row and every column, and we are asking for a \emph{simultaneous} dilation of all of them. 

To  address these questions, we first completely characterize  the matrix convex hull of all quantum permutation matrices with \emph{commuting} entries (see \cref{ssec:free} for a definition of the matrix convex hull).
Namely, we 
show that a quantum magic square dilates to a quantum permutation matrix with commuting entries if and only if it is \emph{semiclassical} (\cref{thm:nodil}). 
Semiclassical means that 
it can be expressed as a sum of tensor products of permutation matrices and positive semidefinite matrices.     
Throughout this paper, we denote the number of rows (which equals the number of columns) of the quantum magic square by $n$. 
When the algebra $\mc{A}$ is ${\rm Mat}_s(\mathbb{C})$ with $s\geq 2$, 
we  show that 
there exist quantum magic squares of all sizes $n\geq 3$ which are not semiclassical (\cref{cor:nodil}).
We also prove a lower bound on the diameter of the free convex set of semiclassical magic squares (\cref{thm:nodil}).

The matrix convex hull of \emph{all} quantum permutation matrices is much harder to describe. We prove that for every size 
$n\geq 3$ there exist quantum magic squares of size $n$ over ${\rm Mat}_2(\mathbb{C})$ that do not dilate to any quantum permutation matrix (\cref{thm:mconv}).
This means that the matrix convex hull of the set of quantum permutation matrices is  strictly contained in the set of all quantum magic squares. 
We conclude
 that all quantum permutation matrices are Arveson extreme points
(see \cref{def:arveson})
 of the free spectrahedron of quantum magic squares, but that for  $n\geq 3$ not every Arveson extreme point is a quantum permutation matrix.

This paper is organized as follows. \cref{sec:prelim} contains  preliminary material   needed in the course of the paper, including quantum magic squares and permutation matrices,  basic concepts of non-commutative semialgebraic geometry and convexity, and operator systems. 
\cref{sec:main} contains our main  results. 
In the first part (\cref{ssec:good}), 
we give a complete characterization of the quantum magic squares that dilate to a quantum permutation matrix with commuting entries (\cref{thm:nodil}), 
we prove a  lower bound on the diameter of the set of these quantum magic squares, 
and  we conclude that there exist non-semiclassical quantum magic squares of any size $n\geq 3$, 
and non-semiclassical quantum permutation matrices of any size $n\geq 4$ (\cref{cor:nodil}). 
In the second part  (\cref{ssec:bad}), 
we show that for any $n\geq 3$ there exist quantum magic squares of size $n$
that do not dilate to any quantum permutation matrix (\cref{thm:mconv}),
and we conclude that quantum permutation matrices are Arveson extreme points of the set of all quantum magic squares, but not every Arveson extreme point is of that form (\cref{cor:arv}).

%%======================================================

\section{Preliminaries}
\label{sec:prelim}

In this section we define and characterize 
quantum magic squares and quantum permutation matrices   (\cref{ssec:qperm}), 
free semialgebraic sets and matrix convex hulls (\cref{ssec:free}),
and  certain operator systems (\cref{ssec:cones}).

%%====================================
\subsection{Quantum magic squares and quantum permutation matrices}
\label{ssec:qperm}

Throughout this paper let $\mc A$ be a $C^*$-algebra. The real vector space of Hermitian elements in $\mc A$ is denoted by $\mc A_{\rm her}$, and the closed convex cone of positive elements in $\mc A_{\rm her}$ by $\mc A^+$. 
We will mostly assume that $\mc A={\rm Mat}_s(\C)$, the $C^*$-algebra of complex matrices of size $s\times s$. 
Instead of ${\rm Mat}_s(\C)_{\rm her}$ we write ${\rm Her}_s(\C)$, 
and instead of $a\in {\rm Mat}_s(\C)^+$ we just write $a\geqslant 0$.

\begin{definition}
Let $n\in\N$ and   $A=\left( a_{ij}\right)_{i,j}\in{\rm Mat}_n(\mc A).$

(i) $A$ is called a \emph{magic square over $\mc A$} 
 if all $a_{ij}\in\mc A^+$ and  all rows and columns of $A$ sum to $1,$ the identity element in $\mc A$. A magic square over some algebra $\mc{A}$ is called a \emph{quantum magic square}. 

(ii) A magic square over $\mc A$ is called a 
\emph{permutation matrix over $\mc A$}  
if all $a_{ij}$ are projectors, i.e.\ fulfill $a_{ij}^2=a_{ij}$. 
A permutation matrix over some $\mc A$ is called a \emph{quantum permutation matrix}.
\end{definition}

\begin{remark}\label{rem:squares}
(i) Every quantum permutation matrix is a unitary  in the sense that $$A^*A=AA^*=I_n\in{\rm Mat}_n(\mc A).$$ 
Although the defining conditions only require 
$$\sum_{j} a_{ij}=\sum_{i}a_{ij}=1$$ 
for all $i,j$, the fact that the $a_{ij}$ are projectors in a $C^*$-algebra implies that  
$$a_{ij}a_{ik}=0=a_{ji}a_{ki}$$ for all $i$ and $j\neq k$.

(ii) In the case $\mc A=\C$,  
quantum magic squares become magic squares 
and quantum permutations become permutation matrices (that is, a matrix with exactly one 1 in every row and column, and 0 elsewhere).
The well-known Birkhoff--von Neumann Theorem states that permutation matrices are precisely the vertices of the polytope of all magic squares, see for example \cite{Ba02, Zie95}. 
In particular, every magic square  
is a convex combination of permutation matrices. 
This holds for all matrix sizes $n\geq 1$.

(iii) For matrix size $n=1$ and any $\mc A$, the only magic square over $\mc A$ is $(1)$. For $n=2$ every magic square over $\mc A$ is of the form 
\begin{equation*}
\left(\begin{array}{cc}a & 1-a \\1-a & a\end{array}\right)=\left(\begin{array}{cc}1 & 0 \\0 & 1\end{array}\right)\otimes a + \left(\begin{array}{cc}0 & 1 \\1 & 0\end{array}\right)\otimes (1-a)
\end{equation*} 
with $a,1-a\in\mc A^+.$ Whether this is a 
convex combination of permutation matrices over $\mc A$ just depends on whether $a$ is a convex combination of projectors in $\mc A$. 
In case that $\mc A={\rm Mat}_s(\C)$ is a matrix algebra, this was first shown to hold in \cite{Choi90} for any such $a$. 
If $\mc A=\mc B(\mc H)$ is the algebra of bounded operators on an infinite-dimensional Hilbert space $\mc H$, this is not true with \emph{finite} convex combinations \cite{Fil67}. 
But from the spectral theorem it follows that it holds up to the closure, 
even over any  $C^*$-algebra. 
So the Birkhoff--von Neumann Theorem holds for matrices of size $2$ over matrix algebras with 
convex combinations, and over arbitrary $C^*$-algebras with 
convex combinations and closures. 
This is known to fail for $n\geq 3$ and we will prove a  
stronger statement below.
\end{remark}

We denote the permutation group of $n$ elements by $S_n$, as customary. 

\begin{definition}
A magic square  $A\in{\rm Mat}_n(\mc A)$
over $\mc{A}$
 is called \emph{semiclassical} if it can be written as 
$$A=\sum_{\pi\in S_n} P_{\pi}\otimes q_\pi,$$ 
where $P_{\pi}\in {\rm Mat}_n(\C)$ are permutation matrices and $q_\pi\in\mc A^+$ (with $\sum_\pi q_{\pi}=1$).
\end{definition}

\begin{remark}
\label{rem:semiclassical}
(i) For $n=1,2$ all quantum magic squares are semiclassical. For $\mc A=\C$ and all $n$, all magic squares over $\mc A$ are semiclassical. We will show in \cref{cor:nodil} that for $n\geq 3$ and $\mc A={\rm Mat}_s(\C)$ with   $s\geq 2$  there exist magic squares over $\mc A$ that are not semiclassical. There even exist non-semiclassical quantum permutation matrices, at least for $n\geq 4$ and $s\geq 2$.

(ii) Let $A=(a_{ij})\in\operatorname{Mat}_n(\mathcal{A})$ be a magic square over $\mathcal{A}$ and let $\rho\colon S_n\to\{1,\dots,n!\}$ be a bijection. We set
\begin{align*}
	B &:= \begin{pmatrix}
	0 & 1 & 0 & 0 & 0\\
	1 & 0 & 0 & 0 & 0\\
	0 & 0 & 0 & 0 & 0\\
	0 & 0 & 0 & 0 & 0\\
	0 & 0 & 0 & 0 & 0
	\end{pmatrix},\quad C_{ij} := \begin{pmatrix}
	0 & 0 & 0 & 0 & 0\\
	0 & 0 & 0 & 0 & 0\\
	0 & 0 & 0 & 0 & 0\\
	0 & 0 & 0 & 0 & E_{ij}\\
	0 & 0 & 0 & E_{ij}^t & 0
	\end{pmatrix},\\
	D_{\pi} &:= \begin{pmatrix}
	0 & -1 & 0 & 0 & 0\\
	-1 & 0 & 0 & 0 & 0\\
	0 & 0 & E_{\rho(\pi)\rho(\pi)} & 0 & 0\\
	0 & 0 & 0 & 0 & -P_{\pi}\\
	0 & 0 & 0 & -P_{\pi}^t & 0
	\end{pmatrix}
\end{align*}
where $E_{kl}$ denotes the canonical basis of either $\operatorname{Mat}_{n}(\C)$ or $\operatorname{Mat}_{n!}(\C)$. Then $A$ is semiclassical if and only if there exist $q_{\pi}\in\mathcal{A}$ for all $\pi\in S_n$ such that
\begin{equation*}
	B\otimes 1_{\mathcal{A}}+\sum_{i,j=1}^{n} C_{ij}\otimes a_{ij}+\sum_{\pi\in S_n}^{} D_{\pi}\otimes q_{\pi}
\end{equation*}
is an element of $\operatorname{Mat}_{n!+2n+2}(\mathcal{A})^+$. Thus, for $\mathcal{A}=\operatorname{Mat}_s(\C)$ checking whether a magic square is semiclassical reduces to a semidefinite program.
\end{remark}

%%%============================
\subsection{Matrix convex hulls}
\label{ssec:free} In this section we consider non-commutative sets, which are sets of matrix tuples, for any size of the matrices. As we will  see, the set of quantum magic squares and of quantum permutation matrices are examples of such sets. 

\begin{definition}
For $n,s\geq 1$ we define 
\begin{align*}
\mc M_{s}^{(n)}&:=\left\{ A\in {\rm Mat}_n\left({\rm Her}_s(\C)\right) \mid A \mbox{  quantum  magic square}\right\} \\
\mc P_{s}^{(n)}&:=\left\{ A\in {\rm Mat}_n\left({\rm Her}_s(\C)\right) \mid A \mbox{ quantum permutation matrix}\right\}\\ 
\mc{CP}_{s}^{(n)}&:=\left\{ A\in \mc P_s^{(n)} \mid \mbox{all entries of $A$ commute}\right\}
\end{align*} and 
$$
\mc M^{(n)}:=\bigcup_{s\geq 1} \mc M_{s}^{(n)} \qquad \mc P^{(n)}:=\bigcup_{s\geq 1} \mc P_{s}^{(n)} \qquad \mc{CP}^{(n)}:=\bigcup_{s\geq 1} \mc{CP}_{s}^{(n)}.
$$
\end{definition}

We have the obvious inclusions 
$$
\mc{CP}^{(n)}\subseteq \mc P^{(n)} \subseteq \mc M^{(n)} .
$$ 

For $n\leq 3$ we have $$\mc{CP}^{(n)}=\mc P^{(n)}.$$ This is clear for $n=1,2$ and for $n=3$ a nice elementary proof can be found in \cite{Lu17}. For $n\geq 4$  $$\mc{CP}^{(n)}\subsetneq \mc P^{(n)}$$ holds, which can easily be seen by taking a 
block diagonal sum of two quantum permutations of size $2$ each.

As customary, we use ${\rm Mat}_{s,t}(\C)$ to denote the set of matrices of size $s\times t$ over the complex numbers.

\begin{definition}\label{def:mconv}
For each $s\geq 1$ let some set $R_s\subseteq {\rm Her}_s(\C)^m$ be given. 
Then $R=\bigcup_{s\geq 1} R_s$ is called \emph{matrix convex} if for all $r,s_i,t\geq 1$, $a^{(i)}\in R_{s_i}$ for $i=1,\ldots, r$ and $v_i\in {\rm Mat}_{s_i,t}(\C)$ with $\sum_{i=1}^r v_i^*v_i=I_t$, 
$$
\sum_{i=1}^r v_i^*a^{(i)}v_i\in R_t
$$ 
holds. Here, compression of the $m$-tuple 
$a^{(i)}$ by $v_i$ is meant entrywise.
That is, we identify $ {\rm Her}_s(\C)^m$ with ${\rm Mat}_{1,m}(\C)\otimes {\rm Her}_s(\C)$ 
so that 
$v_i^* a^{(i)}v_i$ is identified with 
$ ( 1 \otimes v_i^*)a^{(i)} ( I_m \otimes v_i).$
\end{definition}

For more information on matrix convexity, see e.g. \cite{Ef97}.

\begin{remark}\label{rem:mconv}
(i) If $R=\bigcup_{s\geq 1} R_s$ is matrix convex, then each $R_s$ is convex in the 
standard
 sense.  But the converse is generally not true, that is, being matrix convex is generally a stronger requirement than simply being convex at each level $s$.

(ii) 
The condition $\sum_{i=1}^r v_i^*a^{(i)}v_i$ can be written as  $v^*av$ by defining  
$$v=(v_1, \ldots, v_r)^t$$ 
and 
$$
a =\left(\bigoplus_{i=1}^r a_1^{(i)}, \ldots, \bigoplus_{i=1}^r a_m^{(i)}\right) ,
$$
where we have written $a^{(i)} = (a^{(i)}_1,\ldots, a^{(i)}_m)$. 
If the set  $R$ is closed under direct sums, then $a\in R$, and thus for these sets the definition of matrix convexity can be simplified to include this single summand. 
One example of a set closed under direct sums are  free semialgebraic sets, i.e.\ sets defined by non-commutative polynomial inequalities (see for example \cite{Ne19} for more information).

(iii) Intersections of matrix convex sets are matrix convex. It follows that for every set $R=\bigcup_{s\geq 1} R_s$ there exists the matrix convex hull, i.e.\ the smallest matrix convex superset, which is obtained by  adding all compressions of elements from $R$ as in \cref{def:mconv}, 
and which we denote by ${\rm mconv}(R)$.
\end{remark}

\begin{example}
The set $\mc M^{(n)}$ of quantum magic squares is matrix convex. For $n\geq 2$ the sets $\mc{CP}^{(n)}$ and $\mc{P}^{(n)}$ are not. We obtain 
$${\rm mconv}\left(\mc{CP}^{(n)}\right)\subseteq {\rm mconv}(\mc P^{(n)}) \subseteq \mc M^{(n)}.$$ 
For $n\leq 2$ these inclusions are all equalities. For $n=2$ this follows from \cref{rem:squares} (iii) and \cref{rem:mconv} (i). For $n=3$ the inclusion on the left is an equality.
\end{example}

%%=========================================
\subsection{Magic cones and operator systems}
\label{ssec:cones}

Let $\mc V^{(n)}\subseteq {\rm Mat}_n(\C)$ be the space of all matrices with constant row and column sums, equipped with the involution of entrywise conjugation. Let $$\mc C^{(n)}\subseteq \mc V_{\rm her}^{(n)}$$ be the  convex cone of real and entrywise nonnegative matrices from $\mc V^{(n)}$, that is, the set of magic squares in the non-normalized sense.
 We call $\mc C^{(n)}$ the \emph{magic cone}.

\begin{lemma}\label{lem:cone}
(i) $\mc V^{(n)}$ is spanned over $\C$ by the permutation matrices, $\mc V^{(n)}_{\rm her}$ is spanned over $\R$ by the permutation matrices, and $$\dim_{\C}\left(\mc V^{(n)}\right)=\dim_{\R}\left(\mc V^{(n)}_{\rm her}\right)=(n-1)^2+1.$$

(ii) $\mc C^{(n)}$ is a  
polyhedral cone that does not contain a full line   in $\mc V^{(n)}_{\rm her},$
and the all-ones matrix is an interior point.

(iii) The extreme rays of $\mc C^{(n)}$ are precisely the rays spanned by the permutation matrices, of which there are $n!$ many.  For $n\geq 3$, $\mc C^{(n)}$ has precisely $n^2$ facets, defined by the inequalities $x_{ij}\geq 0$ on the matrix entries, 
and $\mc C^{(n)}$ is not a simplex cone (i.e.\ the number of extreme rays exceeds its dimension). 
\end{lemma}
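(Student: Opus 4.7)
For part (i), I would first determine the dimension by counting independent linear constraints on $\operatorname{Mat}_n(\C)$: ``all row sums equal'' and ``all column sums equal'' each impose $n-1$ equations, the additional ``row sum $=$ column sum'' being automatic from the total sum. A direct check on the basis matrices $\{E_{kl}\}$ shows these $2(n-1)$ equations are linearly independent, giving $\dim_\C \mc V^{(n)} = n^2-2(n-1) = (n-1)^2+1$, and the real part $\mc V^{(n)}_{\rm her}$ has the same dimension over $\R$. For the spanning claim I would invoke the classical Birkhoff--von Neumann Theorem: the Birkhoff polytope $\operatorname{conv}(P_\pi : \pi \in S_n)$ has affine dimension $(n-1)^2$, and since these matrices lie on the affine hyperplane ``row/column sum $=1$'' (which misses the origin), their real linear span has dimension $(n-1)^2+1$ and hence coincides with $\mc V^{(n)}_{\rm her}$; the complex span then equals $\mc V^{(n)}$ by complexification.

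For part (ii), the three assertions are essentially immediate: $\mc C^{(n)}$ is the intersection of $\mc V^{(n)}_{\rm her}$ with the $n^2$ halfspaces $\{x_{ij}\geq 0\}$, hence polyhedral; if $A,-A\in\mc C^{(n)}$ then every entry satisfies $a_{ij}\geq 0$ and $-a_{ij}\geq 0$, forcing $A=0$, so no full line is contained; and the all-ones matrix has every entry strictly positive, so a sufficiently small perturbation within $\mc V^{(n)}_{\rm her}$ keeps all entries positive and remains in $\mc C^{(n)}$, placing it in the interior.

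For part (iii), the cone $\mc C^{(n)}$ has as compact cross-section (by the affine hyperplane of row/column sum $1$) exactly the Birkhoff polytope, whose vertex set by Birkhoff--von Neumann consists of the $n!$ permutation matrices; the extreme rays of a pointed polyhedral cone correspond bijectively to these vertices, which establishes the extreme-ray claim. The main technical step is the facet count. Fixing $(i_0,j_0)$, I would reduce to $(1,1)$ by symmetry and parametrize the face $F:=\{A\in\mc C^{(n)} : a_{11}=0\}$ by the block decomposition $A = \bigl(\begin{smallmatrix} 0 & r^T \\ c & M \end{smallmatrix}\bigr)$ with common row/column sum $s\geq 0$, vectors $r,c\in\R^{n-1}_{\geq 0}$ each summing to $s$, and $M$ an $(n-1)\times(n-1)$ nonnegative matrix having marginals $s-c_i$ and $s-r_j$. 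The parameter count $1 + 2(n-2) + (n-2)^2 = (n-1)^2$ (using that the transport polytope with positive marginals has dimension $(n-2)^2$) yields the upper bound on $\dim F$, while the explicit interior point $s=1$, $r=c = \tfrac{1}{n-1}\mathbf 1$, $M = \tfrac{n-2}{(n-1)^2}J$ attains it for $n\geq 3$. This shows each $\{x_{ij}\geq 0\}$ is facet-defining; distinctness of the $n^2$ facets is checked by observing that for $n\geq 3$ any two distinct positions can be separated by a permutation matrix that vanishes at one but not the other. Non-simpliciality is then immediate from $n! > (n-1)^2+1$ for every $n\geq 3$ (already $6>5$ at $n=3$). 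The main obstacle is the dimension calculation for $F$, which is exactly where the hypothesis $n\geq 3$ enters---it fails at $n=2$ because no doubly stochastic $2\times 2$ matrix has exactly one zero entry; everything else is formal.
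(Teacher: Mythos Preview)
Your proposal is correct and follows essentially the same approach as the paper, which simply states that ``all statements are easy observations or straightforward translations of the corresponding well-known facts about the Birkhoff polytope'' and refers to \cite{Ba02, Zie95}. You have filled in those details explicitly---the dimension count, the Birkhoff--von Neumann spanning argument, and the facet analysis via the cross-section polytope---so your write-up is considerably more self-contained than the paper's, but not methodologically different.
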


\begin{proof} All statements are easy observations or straightforward translations of the corresponding well-known facts about the Birkhoff polytope, see for example  \cite{Ba02,Zie95}.
\end{proof}

We now introduce the smallest and the largest operator system over the magic cone $\mc C^{(n)}$, and collect some of their properties. We refer the reader to  \cite{Fr16} for more detailed explanations of these notions.

\begin{definition}(i) For $s\in\N$ we define 
$$\mc S_s^{(n)} := \left\{ \sum_{\pi\in S_n} P_\pi \otimes q_\pi\mid  q_\pi\in {\rm Mat}_s(\C)^+\right\}
$$ 
where $P_{\pi}\in {\rm Mat}_n(\C)$ are the  permutation matrices, 
and understand this convex cone as a subset of the Hermitian part of 
$$
\mc V^{(n)}\otimes {\rm Mat}_s(\C)\subseteq {\rm Mat}_n\left({\rm Mat}_s(\C) \right).
$$  We call 
$$
\mc S^{(n)}:=\bigcup_{s\geq 1} \mc S_s^{(n)}
$$ 
the \emph{smallest magic operator system}.

(ii) For $s\in\N$ we define 
$$
\mc  L_s^{(n)}:=\left\{ A \in (\mc V^{(n)}\otimes {\rm Mat}_s(\C))_{\rm her}\mid \forall v\in\C^s\colon (I_n\otimes v)^*A(I_n\otimes v)\in \mc C^{(n)}\right\}
$$ 
and call   
$$
\mc L^{(n)}:=\bigcup_{s\geq 1} \mc L_s^{(n)}
$$ 
the \emph{largest magic operator system}.
\end{definition}

\begin{remark}\label{rem:mos}
(i) $\mc S^{(n)}$ and $\mc L^{(n)}$ are really the smallest/largest abstract operator systems over $\mc C^{(n)}$, as defined in \cite{Fr16}. The only difference is that we consider $\mc V^{(n)}\otimes {\rm Mat}_s(\C)$ as a subspace of ${\rm Mat}_n\left({\rm Mat}_s(\C)\right)$, instead of identifying it with ${\rm Mat}_s\left(\mc V^{(n)}\right)$.

(ii) Magic squares of size $n$  over ${\rm Mat}_s(\C)$ are precisely the elements from $\mc L_s^{(n)}$ with the additional property that every row and column sums to  $I_s\in{\rm Mat}_s(\C)$.  
Such a quantum magic square is semiclassical if and only if it belongs to $\mc S_s^{(n)}.$

(iii) From \cref{lem:cone} and \cite[Theorem 1]{Hu19} it follows that for $n\leq 2$ we have $\mc S^{(n)}=\mc L^{(n)}$, but $$\mc S_s^{(n)}\subsetneq \mc L_s^{(n)}$$ for all $n\geq 3$ and $s\geq 2$. 
\end{remark}

%%======================================================
\section{Dilations of quantum magic squares}

\label{sec:main}

In this section we present the main results of this paper. Namely, we characterize semiclassical quantum magic squares (\cref{ssec:good}), 
and the matrix convex hull of quantum permutations (\cref{ssec:bad}).

%%======================================================
\subsection{Characterization of semiclassical  quantum magic squares}
\label{ssec:good}
The following is our main result concerning the characterization of semiclassical quantum magic squares.  

\begin{theorem}\label{thm:nodil} 
%Let   $A=\left(a_{ij}\right)_{i,j}\in\mc M_s^{(n)}$.
%
%(i)  $A$ is semiclassical if and only if $A\in{\rm mconv}\left(\mc{CP}^{(n)}\right).$
%
%(ii) If 
%$$
%\sum_{k=1}^n a_{k,\pi(k)}\geqslant \frac{n-2}{n-1}\cdot I_s
%$$ 
%for all $\pi\in S_n$, then $A$ is semiclassical.
Let $a=(a_{ij})\in\mathcal{M}_s^{(n)}$ be a quantum magic square.
\begin{enumerate}
	\item[(i)] Let $\C^{S_n}$ be the $C^*$-algebra of all functions $S_n\to\C$ and let $f_{ij}\in\C^{S_n}$ be defined as
	\begin{equation*}
		f_{ij}(\pi):=\begin{cases}
		\begin{array}{cl}
		1 & \colon\pi(i)=j\\
		0 & \colon\textup{else}
		\end{array}
		\end{cases}
	\end{equation*}
	Then the following are equivalent:
	\begin{enumerate}
		\item[(a)] $a$ is semiclassical
		\item[(b)] $a\in\operatorname{mconv}(\mathcal{CP}^{(n)})$
		\item[(c)] There is a positive unital $*$-linear map $\phi\colon\C^{S_n}\to\operatorname{Mat}_s(\C)$ with $\phi(f_{ij})=a_{ij}$.
	\end{enumerate}
	\item[(ii)] If
	\begin{equation*}
		\sum_{k=1}^{n} a_{k,\pi(k)}\geqslant \frac{n-2}{n-1}\cdot I_s
	\end{equation*}
	for all $\pi\in S_n$, then $a$ is semiclassical.
\end{enumerate}
\end{theorem}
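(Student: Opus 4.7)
The three equivalences in part (i) rest on the observation that $\C^{S_n}$ is an abelian $C^*$-algebra with the orthogonal basis of rank-one projectors $\{\delta_\pi\}_{\pi\in S_n}$, and that $f_{ij}=\sum_{\pi:\pi(i)=j}\delta_\pi$. I would organize the proof as (a)$\Leftrightarrow$(c), then (a)$\Rightarrow$(b), then (b)$\Rightarrow$(a). The dictionary (a)$\Leftrightarrow$(c) is essentially tautological: a decomposition $a=\sum_\pi P_\pi\otimes q_\pi$ produces $\phi(g):=\sum_\pi g(\pi)\,q_\pi$, which is automatically positive (abelian domain), unital (since $\sum_\pi q_\pi=1$), and sends $f_{ij}\mapsto a_{ij}$; conversely, setting $q_\pi:=\phi(\delta_\pi)$ reconstructs the semiclassical form.

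For (a)$\Rightarrow$(b) I would dilate via a canonical commuting quantum permutation $B\in\mc{CP}_{n!}^{(n)}$ with diagonal entries $B_{ij}:=\sum_{\pi:\pi(i)=j}E_{\pi\pi}\in\operatorname{Mat}_{n!}(\C)$. Given a semiclassical decomposition of $a$, the block column $V:=\sum_\pi e_\pi\otimes q_\pi^{1/2}\in\operatorname{Mat}_{n!s,s}(\C)$ satisfies $V^*V=I_s$, and a direct computation yields $(I_n\otimes V)^*(B\otimes I_s)(I_n\otimes V)=a$, placing $a$ inside $\operatorname{mconv}(\mc{CP}^{(n)})$ (note $B\otimes I_s\in\mc{CP}_{n!s}^{(n)}$). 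For (b)$\Rightarrow$(a) the key step is a structure lemma: every $b\in\mc{CP}^{(n)}$ decomposes as $b=\sum_\pi P_\pi\otimes p_\pi$ with mutually orthogonal projectors $p_\pi$ summing to the identity. This follows by simultaneously diagonalizing the commuting projector entries of $b$: the joint $0/1$-spectra inherit the row- and column-sum identities and are therefore forced to be genuine permutation matrices. Since semiclassicality is stable under direct sums and entrywise isometric compression, this extends to all of $\operatorname{mconv}(\mc{CP}^{(n)})$.

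For part (ii) the approach is entirely explicit. I propose the ansatz
\begin{equation*}
q_\pi \;:=\; \frac{n-1}{n!}\sum_{k=1}^n a_{k,\pi(k)} \;-\; \frac{n-2}{n!}\,I_s, \qquad \pi\in S_n,
\end{equation*}
which is the unique solution to $\sum_{\pi:\pi(i)=j} q_\pi = a_{ij}$ within the $S_n$-equivariant one-parameter family $q_\pi=\alpha\sum_k a_{k,\pi(k)}+\beta I_s$. Positivity $q_\pi\geqslant 0$ is precisely the stated hypothesis. To verify $a=\sum_\pi P_\pi\otimes q_\pi$ I would establish the double-counting identity
\begin{equation*}
\sum_{\pi:\pi(i)=j}\sum_{k=1}^n a_{k,\pi(k)} \;=\; n(n-2)!\,a_{ij} + (n-2)(n-2)!\,I_s,
\end{equation*}
obtained by separating the $k=i$ contribution (which equals $(n-1)!\,a_{ij}$) from the $k\ne i$ terms (which, using the row-sum constraint $\sum_l a_{kl}=I_s$, contribute $(n-2)!(I_s-a_{kj})$ summed over $k\ne i$). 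Substituting into the ansatz yields $\sum_{\pi:\pi(i)=j}q_\pi=a_{ij}$, and the analogous computation $\sum_k\sum_\pi a_{k,\pi(k)}=n!\,I_s$ gives $\sum_\pi q_\pi=I_s$, completing the semiclassical decomposition.

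None of the steps is genuinely hard: the content of (i) is the dictionary between positive $*$-linear maps out of $\C^{S_n}$ and semiclassical decompositions, which is transparent given the projector basis $\{\delta_\pi\}$ and the elementary spectral theory of commuting projectors in a matrix algebra. The only creative step is locating the explicit ansatz in (ii); once one imposes $S_n$-equivariance the formula is forced, and verification reduces to the single double-counting identity above.
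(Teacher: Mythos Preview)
Your proof is correct and follows essentially the same approach as the paper. The only organizational difference is in closing the cycle: the paper proves (b)$\Rightarrow$(c) by building the $*$-homomorphism $\psi\colon\C^{S_n}\to\operatorname{Mat}_t(\C)$, $\delta_\pi\mapsto\prod_k u_{k,\pi(k)}$, and then compressing, whereas you go (b)$\Rightarrow$(a) via simultaneous diagonalization of the commuting projectors $u_{ij}$; these are the same argument in different clothing, since your joint spectral projectors $p_\pi$ are precisely the paper's $\prod_k u_{k,\pi(k)}$. For part~(ii) your ansatz $q_\pi$ coincides exactly with the paper's $\phi(\delta_\pi)$ after simplifying the constants, and your double-counting identity supplies the ``straightforward computation'' the paper omits.
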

\begin{proof}
(i): For ``(a)$\Rightarrow$(b)" write 
$$a=\sum_\pi P_\pi\otimes q_\pi$$ 
with permutation matrices  $P_\pi=\left( p_{ij}^\pi\right)_{i,j}$  and  $q_\pi=v_\pi^*v_\pi$ for certain  $v_\pi\in{\rm Mat}_s(\C)$. Set 
\begin{align*}
u_{ij}&:={\rm diag}\left( p_{ij}^\pi I_s\mid \pi\in S_n \right)\in{\rm Mat}_{n!s}(\C) \\  
u&:=\left(u_{ij}\right)_{i,j}\in{\rm Mat}_n\left({\rm Mat}_{n!s}(\C)\right)\\ 
v&:=\left (v_\pi\right)_{\pi\in S_n}\in{\rm Mat}_{n!s,s}(\C).
\end{align*} 
Then $v$ is easily checked to be an isometry (using that rows and columns of $a$ sum to $I_s$), $u$ to be a quantum permutation matrix with commuting entries, and $v^*u_{ij} v=a_{ij}$ for all $i,j$.

For ``(b)$\Rightarrow$(c)" let $(u_{ij})\in\mathcal{CP}_t^{(n)}$ and $v\in\operatorname{Mat}_{t,s}(\C)$ such that $v^*v=I_s$ and
\begin{equation*}
	a_{ij}=v^*u_{ij}v.
\end{equation*}
Now we define the map
\begin{equation*}
	\psi\colon\C^{S_n}\to\operatorname{Mat}_t(\C),\quad f\mapsto\sum_{\pi\in S_n} f(\pi)\cdot\prod_{k=1}^{n} u_{k,\pi(k)}.
\end{equation*}
We want to show that $\psi$ is a unital $*$-algebra-homomorphism with $\psi(f_{ij})=u_{ij}$. The $*$-linearity is easy to see. Since the $u_{ij}$ commute we have
\begin{equation*}
	\psi(f)\psi(g)=\sum_{\pi,\sigma\in S_n} f(\pi)g(\sigma)\cdot\prod_{k=1}^{n}u_{k,\pi(k)}u_{k,\sigma(k)}.
\end{equation*}
Now if $\pi\neq\sigma$, then there is an index $k$ with $\pi(k)\neq\sigma(k)$. Since $(u_{ij})$ is a quantum permutation matrix, we get from \cref{rem:squares}~(i) that $u_{k,\pi(k)}u_{k,\sigma(k)}=0$. Thus
\begin{equation*}
	\psi(f)\psi(g)=\sum_{\pi\in S_n} f(\pi)g(\pi)\cdot\prod_{k=1}^{n}u_{k,\pi(k)}^2=\sum_{\pi\in S_n} (f\cdot g)(\pi)\cdot\prod_{k=1}^{n} u_{k,\pi(k)}=\psi(f\cdot g).
\end{equation*}
In order to compute $\psi(f_{ij})$ we first note that the sum in the definition of $\psi$ does not change when we extend it to non-injective maps. This follows from a similar argument as above. Now for $[n]:=\{1,\dots,n\}$ we have
\begin{align*}
	\psi(f_{nn}) &= \sum_{\pi\in S_n} f_{nn}(\pi)\cdot\prod_{k=1}^{n}u_{k,\pi(k)}= \sum_{\pi\colon[n]\to[n],\pi(n)=n}\prod_{k=1}^{n}u_{k,\pi(k)}\\
	&= u_{nn}\sum_{\pi\colon[n-1]\to[n]}\prod_{k=1}^{n-1}u_{k,\pi(k)}= u_{nn}\cdot\sum_{\tau_1=1}^{n}\cdots\sum_{\tau_{n-1}=1}^{n}\prod_{k=1}^{n-1}u_{k,\tau_k}\\
	&= u_{nn}\cdot\prod_{k=1}^{n-1}\sum_{\tau=1}^{n}u_{k,\tau}= u_{nn}
\end{align*}
The last equality follows from the fact that the columns of $(u_{ij})$ sum to $I_t$. Similarly we can compute $\psi(f_{ij})=u_{ij}$ for all $i,j$. Finally, $\psi$ is unital because both $(f_{ij})$ and $(u_{ij})$ are quantum permutation matrices. Now the map $\phi(f):=v^*\psi(f)v$ satisfies the requirements of (c).

For ``(c)$\Rightarrow$(a)" let $\delta_{\pi}\in\C^{S_n}$ be the characteristic function of $\pi$, i.e.\ the function that takes the value $1$ on $\pi$ and $0$ on all other permutations. From $\delta_{\pi}\geqslant 0$ and the properties of $\phi$ we obtain $q_{\pi}:=\phi(\delta_{\pi})\geqslant 0$ and $\sum_\pi q_{\pi}=I_s$. Now we compute
\begin{align*}
	a &= \sum_{i,j=1}^{n} E_{ij}\otimes a_{ij} = (\operatorname{id}\otimes\phi)\left(\sum_{i,j=1}^{n} E_{ij}\otimes f_{ij}\right)\\
	&= (\operatorname{id}\otimes\phi)\left(\sum_{\pi\in S_n} P_{\pi}\otimes\delta_{\pi}\right)=\sum_{\pi\in S_n} P_{\pi}\otimes q_{\pi}.
\end{align*}
Hence, $a$ is semiclassical.

(ii): Under the given assumption a straightforward computation shows that
\begin{equation*}
	\phi\colon\C^{S_n}\to\operatorname{Mat}_s(\C),\quad \delta_{\pi}\mapsto \frac{1}{(n-2)!n}\left(\left(\sum_{k=1}^n a_{k, \pi(k)}\right) - \frac{n-2}{n-1}\cdot I_s\right)
\end{equation*}
is a positive unital $*$-linear map with $\phi(f_{ij})=a_{ij}$. Thus, the claim follows from (i).
\end{proof}

\begin{remark}
(i) Note that the condition  in \cref{thm:nodil} (ii) is fulfilled by the constant quantum magic square $\left (\frac{1}{n}I_s\right)_{i,j}$, which is a relative interior point of $\mc M^{(n)}_s$. So \cref{thm:nodil} (ii) can be understood as a lower bound on the diameter of the set of semiclassical quantum magic squares inside $\mc M^{(n)}$.

(ii) Also note that \cref{thm:nodil} (i) together with \cref{rem:semiclassical} (ii) yields an efficient way for checking membership in $\operatorname{mconv}\left(\mc{CP}^{(n)}\right)$.
\end{remark}

Finally, in order to obtain the desired corollary, we need the following preparatory lemma.

\begin{lemma}\label{lem:split}
Let $u\in{\rm Mat}_s(\C)$ be a projector, $0\leqslant w \leqslant I_t \in{\rm Mat}_t(\C)$   (i.e.\ $w$ is a positive semidefinite contraction)  
and $v\in{\rm Mat}_{t,s}(\C)$ an isometry with $$u=v^*wv.$$ Then  $w$ is the direct sum of $u$ and some contraction $0\leqslant p\leqslant I_{t-s}$ with respect to the decomposition  $\C^t= v\C^s \oplus (v\C^s)^\perp.$ 
\end{lemma}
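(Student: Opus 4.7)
The plan is to show that the subspace $v\C^s \subseteq \C^t$ is invariant under $w$, since once this is established, self-adjointness of $w$ forces it to be block diagonal with respect to $\C^t = v\C^s \oplus (v\C^s)^\perp$, and the block on $v\C^s$ is then easily identified with $u$ via the isometry $v$.

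To prove invariance, I would use the key inequality $w^2 \leqslant w$, which follows from $0 \leqslant w \leqslant I_t$ (the spectrum lies in $[0,1]$). Compressing by $v$ gives $v^*w^2v \leqslant v^*wv = u$. On the other hand, inserting the resolution of identity $I_t = vv^* + (I_t - vv^*)$ between the two copies of $w$ yields
\begin{equation*}
v^*w^2 v \;=\; (v^*wv)(v^*wv) + \bigl((I_t - vv^*)wv\bigr)^*\bigl((I_t - vv^*)wv\bigr) \;=\; u + \bigl((I_t - vv^*)wv\bigr)^*\bigl((I_t - vv^*)wv\bigr),
\end{equation*}
where I have used $u^2 = u$. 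Combining the two relations forces $\bigl((I_t - vv^*)wv\bigr)^*\bigl((I_t - vv^*)wv\bigr) \leqslant 0$, hence $(I_t - vv^*)wv = 0$, i.e.\ $wv\C^s \subseteq v\C^s$.

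From here the remainder is routine. Writing $wv = vX$ for some $X \in \mathrm{Mat}_s(\C)$ and applying $v^*$ gives $X = v^*wv = u$, so $w$ acts as $u$ on $v\C^s$. By self-adjointness $w$ preserves $(v\C^s)^\perp$ as well, and the restriction $p$ of $w$ to this orthogonal complement inherits $0 \leqslant p \leqslant I_{t-s}$ from $0 \leqslant w \leqslant I_t$. The only step requiring any thought is the use of the operator inequality $w^2 \leqslant w$ together with the projector identity $u^2 = u$ to produce a nonnegative quantity sandwiched between $0$ and $0$; everything else is bookkeeping with the block decomposition.
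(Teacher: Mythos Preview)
Your proof is correct and follows essentially the same approach as the paper: both arguments exploit $w^2 \leqslant w$ together with $u^2 = u$ to force the off-diagonal part of $w$ (with respect to $v\C^s \oplus (v\C^s)^\perp$) to vanish. The only cosmetic difference is that the paper writes $w = \left(\begin{smallmatrix} u & r \\ r^* & p \end{smallmatrix}\right)$ in block form and reads off $rr^* \leqslant 0$ from the $(1,1)$ block of $w - w^2$, whereas you phrase the same computation as $(I_t - vv^*)wv = 0$ via compression by $v$; these are the same statement.
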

\begin{proof} With respect to the given decomposition  we write $$w=\left(\begin{array}{cc}u & r \\r^* & p\end{array}\right)$$ and compute \begin{align*}\left(\begin{array}{cc}u & r \\r^* & p\end{array}\right)&=w\geqslant w^2 =\left(\begin{array}{cc}u^2+rr^* & ur+rp \\r^*u+pr^* & r^*r+p^2\end{array}\right) \\ &=\left(\begin{array}{cc}u+rr^* & ur+rp \\r^*u+pr^* & r^*r+p^2\end{array}\right). \end{align*} We obtain $r=0$ and $p\geqslant p^2.$
\end{proof}

\begin{corollary}\label{cor:nodil}
(i) For every $n\geq 3$ and $s\geq 2$ there exist quantum magic squares in  $\mc M^{(n)}_s$ which are not semiclassical. In particular $${\rm mconv}\left(\mc{CP}^{(n)}\right) \subsetneq \mc M^{(n)}.$$

(ii) Every quantum permutation matrix of size $n\leq 3$ is semiclassical, 
but for every $n\geq 4, s\geq 2$ there exist quantum permutation matrices which are not semiclassical.
\end{corollary}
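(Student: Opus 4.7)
The plan is to derive both parts from \cref{thm:nodil}, \cref{lem:split}, and the strict inclusion $\mc S_s^{(n)}\subsetneq \mc L_s^{(n)}$ for $n\geq 3$ and $s\geq 2$ recorded in \cref{rem:mos}(iii).

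I would prove (ii) first. For $n\leq 2$ the statement is contained in \cref{rem:semiclassical}(i), and for $n=3$ the identity $\mc{CP}^{(3)}=\mc P^{(3)}$ noted in \cref{ssec:free} means any quantum permutation matrix of size $3$ already lies in $\mc{CP}^{(3)}\subseteq \operatorname{mconv}(\mc{CP}^{(3)})$, so the implication (b)$\Rightarrow$(a) of \cref{thm:nodil}(i) makes it semiclassical. For $n\geq 4$ and $s\geq 2$, I choose two non-commuting projectors $p,q\in{\rm Mat}_s(\C)$ and consider the block-diagonal quantum permutation matrix
$$
A := \begin{pmatrix} p & 1-p \\ 1-p & p\end{pmatrix}\oplus \begin{pmatrix} q & 1-q \\ 1-q & q\end{pmatrix}\oplus I_{n-4} \;\in\; \mc P_s^{(n)}\setminus \mc{CP}_s^{(n)},
$$
with the last summand omitted when $n=4$. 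To show $A$ is not semiclassical, I argue by contradiction: assuming the contrary, \cref{thm:nodil}(i)(a)$\Rightarrow$(b) produces $u\in\mc{CP}_t^{(n)}$ and an isometry $v\in{\rm Mat}_{t,s}(\C)$ with $v^*u_{ij}v=a_{ij}$ for all $i,j$. Since each $a_{ij}$ is a projector and each $u_{ij}$ is a positive contraction, \cref{lem:split} forces every $u_{ij}$ to be block diagonal in the common decomposition $\C^t= v\C^s\oplus (v\C^s)^\perp$ with $a_{ij}$ as its first block. Commutativity of the $u_{ij}$ then transfers to the $a_{ij}$, contradicting $A\notin\mc{CP}^{(n)}$.

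For (i), the examples from (ii) already cover $n\geq 4$, so only $n=3$ remains. I argue by contradiction: assuming every element of $\mc M_s^{(3)}$ is semiclassical, I will derive $\mc L_s^{(3)}=\mc S_s^{(3)}$, contradicting \cref{rem:mos}(iii). Take $A\in\mc L_s^{(3)}$ with common row/column sum $R$; compressing the defining condition of $\mc L_s^{(n)}$ by any $v\in\C^s$ shows $v^*Rv\geq 0$, so $R$ is positive semidefinite. Set
$$
A_\epsilon := A + \epsilon\cdot\left(\tfrac{1}{3}I_s\right)_{i,j},
$$
using that the constant quantum magic square is semiclassical. Then $A_\epsilon\in\mc L_s^{(3)}$ has invertible row/column sum $R+\epsilon I_s$, so entrywise conjugation by $(R+\epsilon I_s)^{-1/2}$ yields a genuine quantum magic square $B_\epsilon\in\mc M_s^{(3)}$. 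Because semiclassicality is invariant under entrywise congruence by invertible matrices (as $(I_n\otimes V)\sum_\pi P_\pi\otimes q_\pi(I_n\otimes V^*)=\sum_\pi P_\pi\otimes Vq_\pi V^*$), the assumption renders $B_\epsilon$ and hence $A_\epsilon$ semiclassical. Letting $\epsilon\to 0^+$ and using that $\mc S_s^{(n)}$ is closed (it is the image of the closed PSD cone $({\rm Mat}_s(\C)^+)^{n!}$ under a linear map that is injective on the cone, hence admits a compact base) places $A\in\mc S_s^{(3)}$, the desired contradiction.

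The main obstacle is the normalization step in (i): an element of $\mc L_s^{(3)}\setminus\mc S_s^{(3)}$ must be converted into a genuine quantum magic square without losing non-semiclassicality. The congruence invariance of $\mc S_s^{(n)}$ combined with the perturbation by the constant semiclassical magic square handles this cleanly, while in (ii) all the real work is absorbed by \cref{lem:split}.
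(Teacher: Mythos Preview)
Your argument for (ii) is essentially identical to the paper's. For (i), the paper proceeds more directly and uniformly in $n$: since $\mc S_s^{(n)}\subsetneq\mc L_s^{(n)}$ is a strict inclusion of closed convex cones with non-empty interior, one can immediately choose an element $A\in\mc L_s^{(n)}\setminus\mc S_s^{(n)}$ whose row/column sum is already positive definite (such elements are dense in $\mc L_s^{(n)}$ and the complement of $\mc S_s^{(n)}$ is relatively open), and then a single congruence by $w$ with $w^*Rw=I_s$ produces the desired non-semiclassical quantum magic square. No perturbation or limiting argument is needed. Your route---reducing $n\geq 4$ to part (ii) and handling $n=3$ by the $\epsilon$-perturbation plus closedness of $\mc S_s^{(3)}$---is correct and reaches the same conclusion, just with a few more moving parts.

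One small imprecision: the linear map $(q_\pi)_\pi\mapsto\sum_\pi P_\pi\otimes q_\pi$ is \emph{not} injective on the positive cone for $n\geq 3$, since the permutation matrices are linearly dependent. What you actually need, and what does hold, is that the only element of the cone mapping to $0$ is $0$ itself (because the row sum of $\sum_\pi P_\pi\otimes q_\pi$ equals $\sum_\pi q_\pi$). Together with the existence of a compact base for $({\rm Mat}_s(\C)^+)^{n!}$ this is enough to conclude that $\mc S_s^{(n)}$ is closed, so your limiting step is justified.
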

\begin{proof}
(i) From \cref{rem:mos} (iii) we know that $$\mc S_s^{(n)}\subsetneq \mc L_s^{(n)}$$ is a strict inclusion of closed convex cones with non-empty interior. So there must be an element $A\in\mc L_s^{(n)} \setminus \mc S_s^{(n)}$ for which the row and  
column sum is a positive definite  matrix $0<a\in{\rm Mat}_s(\C)^+$. 
Thus, there exists an invertible matrix $w\in{\rm Mat}_s(\C)$ such that $w^*aw=I_s$. Replacing the entries $a_{ij}$ of $A$ by $w^*a_{ij}w$ leads to a quantum magic square that is clearly not contained in $\mc S^{(n)}_s,$ meaning that it is not semiclassical.

(ii) For $n\leq 3$ we have $\mc{CP}^{(n)}=\mc P^{(n)}$, from which the first statement follows. For $n\geq 4$ and $s\geq 2$ there exist quantum permutation matrices  $U=\left(u_{ij}\right)_{i,j}\in\mc P_s^{(n)}\setminus \mc{CP}_s^{(n)}.$ One can take for example a  block-diagonal sum of two suitable smaller quantum permutation matrices.  If such $U$ was semiclassical, it would admit a dilation as in \cref{thm:nodil} (i), i.e.\ $u_{ij}=v^*w_{ij}v$ with  $\left(w_{ij}\right)_{i,j}\in\mc{CP}^{(n)}$. By \cref{lem:split} we can assume $$w_{ij}=\left(\begin{array}{cc}u_{ij} & 0 \\0 & p_{ij}\end{array}\right)$$ which contradicts the fact that the $u_{ij}$ do not all commute.
\end{proof}

%%%=================================
\subsection{The matrix convex hull of quantum permutation matrices}
\label{ssec:bad}
Our main goal in this section is to prove the following result.

\begin{theorem}\label{thm:mconv}
For every $n\geq 3$ we have ${\rm mconv}\left(\mc P^{(n)}\right)\subsetneq \mc M^{(n)}.$ The difference already appears at level $s=2.$
\end{theorem}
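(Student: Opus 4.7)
The plan is to dispose of $n=3$ using the results already proved and, for $n \geq 4$, reduce to the case $n=3$ via a block-diagonal embedding of a non-dilatable size-$3$ magic square.

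For $n = 3$, the discussion following \cref{def:mconv} recorded $\mathcal{CP}^{(3)} = \mathcal{P}^{(3)}$, so $\operatorname{mconv}(\mathcal{P}^{(3)}) = \operatorname{mconv}(\mathcal{CP}^{(3)})$; by \cref{thm:nodil} (i) this equals the cone of semiclassical magic squares, and \cref{cor:nodil} (i) provides a non-semiclassical $a \in \mathcal{M}_2^{(3)}$, furnishing the required separating element at level $s=2$.

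For $n \geq 4$, take the above non-semiclassical $a \in \mathcal{M}_2^{(3)}$ and form
$$
\tilde a \;:=\; a \;\oplus\; \operatorname{diag}(I_2, \ldots, I_2) \;\in\; \mathcal{M}_2^{(n)},
$$
i.e.\ the block-diagonal magic square with $a$ in the top-left $3 \times 3$ block, scalar identities $I_2$ on the remaining $n - 3$ diagonal positions, and zeros elsewhere. The argument is to derive a contradiction from $\tilde a \in \operatorname{mconv}(\mathcal{P}^{(n)})$ by extracting a quantum permutation matrix of size $3$ dilating $a$, which would contradict $a \notin \operatorname{mconv}(\mathcal{P}^{(3)})$. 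Suppose $\tilde a_{ij} = v^* u_{ij} v$ with $(u_{ij}) \in \mathcal{P}_t^{(n)}$ and $v \colon \C^2 \to \C^t$ an isometry. The zero entries of $\tilde a$ and positivity of the $u_{ij}$ force $u_{ii} v = v$ for $i \geq 4$ and $u_{ij} v = 0$ whenever $\tilde a_{ij} = 0$. Setting $W := \bigcap_{i \geq 4} \operatorname{range}(u_{ii})$, one has $\operatorname{range}(v) \subseteq W$, and the orthogonality $u_{il} u_{ii} = u_{li} u_{ii} = 0$ for $l \neq i$ combined with $u_{kk} x = x$ for $x \in W$ and $k \geq 4$ gives $u_{ij}|_W = 0$ for every $(i,j)$ outside the top-left $3 \times 3$ block with $i \neq j$. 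Consequently, on $W$ the row and column sums of $(u_{ij})_{i,j \leq 3}$ collapse to $I_W$, so that $(u_{ij}|_W)_{i,j \leq 3}$ carries a size-$3$ magic-square structure on $W$ satisfying $a = v^* (u_{ij}|_W) v$.

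The main obstacle is to upgrade this magic square on $W$ to a genuine quantum permutation matrix of size $3$, which requires $W$ to be invariant under each $u_{ij}$ with $i, j \leq 3$. Since $u_{kk}$ ($k \geq 4$) and $u_{ij}$ ($i, j \leq 3$) share no common row or column, they need not commute: verifying $u_{kk} u_{ij} x = u_{ij} x$ for $x \in W$ reduces, via $\sum_{j'} u_{kj'} = I$, to showing $\sum_{j' \neq k} u_{kj'} u_{ij} x = 0$. Column-orthogonality $u_{kj} u_{ij} = 0$ handles $j' = j$, but the remaining summands (with $j' \neq j, k$) feature projectors sharing no row or column and do not vanish as operators in general. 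Resolving this---either by a delicate argument directly on $W$ that exploits the full QPM relations between rows and columns to cancel the cross terms, or by replacing $W$ with the smallest $u_{ij}$-invariant subspace ($i, j \leq 3$) containing $\operatorname{range}(v)$ and verifying that the row/column sum identities persist on the enlargement---is the technical heart of the proof. Once invariance is secured, $(u_{ij}|_W)_{i, j \leq 3}$ becomes a quantum permutation matrix of size $3$ dilating $a$ via $v \colon \C^2 \to W$, contradicting $a \notin \operatorname{mconv}(\mathcal{P}^{(3)})$ and completing the argument for all $n \geq 4$.
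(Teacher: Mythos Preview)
Your treatment of $n=3$ is correct and matches the paper. For $n\geq 4$, however, your proposal has a genuine gap: you propose exactly the ``obvious embedding argument'' that the authors explicitly state they were \emph{unable} to carry through (see the paragraph immediately following the theorem statement), and you do not actually close it.

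The obstacle you yourself isolate is real and unresolved. For $x\in W$, $k\geq 4$, and $i,j\leq 3$, you need $u_{kk}u_{ij}x=u_{ij}x$, which reduces to showing $u_{kl}u_{ij}x=0$ for each $l\notin\{j,k\}$. But $u_{kl}$ and $u_{ij}$ share neither a row ($k\neq i$) nor a column ($l\neq j$), so the QPM relations impose \emph{no} orthogonality between them; there is nothing in ``the full QPM relations between rows and columns'' to exploit here. Your alternative---enlarging $W$ to the smallest subspace $W'$ invariant under all $u_{ij}$ with $i,j\leq 3$---fails for the dual reason: once you leave $W$ you lose the information that the off-block entries $u_{ij}$ (with $i\leq 3<j$ or $j\leq 3<i$) vanish, so there is no reason the restricted row/column sums $\sum_{j\leq 3}u_{ij}|_{W'}$ should equal $I_{W'}$. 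Declaring this step ``the technical heart of the proof'' and moving on is not a proof.

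The paper avoids the embedding-of-dilations problem entirely. Instead it manufactures a necessary condition for membership in $\operatorname{mconv}(\mathcal P^{(n)})$---the existence of $X\in(\mathcal Z^{(n)}\otimes\mathcal Z^{(n)}\otimes\operatorname{Mat}_s(\C))_{\rm her}$ with $\varphi(A)+X\geqslant 0$ (\cref{prop:mconvconstraint})---which, unlike the dilation itself, \emph{is} manifestly stable under the block-diagonal embedding $A\mapsto A'=\begin{pmatrix}A&0\\0&I_2\end{pmatrix}$: compressing $\varphi(A')+X'$ by $v\otimes v\otimes I_2$ recovers $\varphi(A)+X$. An explicit $A\in\mathcal M_2^{(3)}$ violating the condition (verified by an exact SDP-duality certificate) then propagates the failure to all $n\geq 3$ at level $s=2$.
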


Note that for $n=3$ the result follows from \cref{cor:nodil} (i) and the fact that $\mc{CP}^{(3)}=\mc P^{(3)}$. Proving the result for general $n\geq 4$ requires some more work, since we were not able to deduce it from an obvious embedding argument. 
We will instead establish a necessary condition for being in the matrix convex hull of $\mc P^{(n)}$ (\cref{prop:mconvconstraint}),  
show its failure for a quantum magic square from $\mc M^{(3)}_2$, 
and see that this implies failure for obvious embeddings 
 into quantum magic squares of larger size $n$.

\begin{construction}\label{con}
Let $A=(a_{ij})\in\mc{M}_s^{(n)}$. We define matrices $\operatorname{col}(A)$ and $\operatorname{diag}(A)$ by writing the entries of $A$ in one column and on the diagonal of a matrix respectively, where pairs of indices are ordered lexicographically, i.e. in the order $(1,1),(1,2),(1,3),\dots,(2,1),(2,2),\dots,(n,n)$. We can consider these matrices as elements of threefold tensor products as follows:
\begin{align*}
	\operatorname{col}(A) &:= \sum_{i,j=1}^{n} e_i\otimes e_j\otimes a_{ij}\in\C^n\otimes\C^n\otimes\Her_s(\C)\\
	\operatorname{diag}(A) &:= \sum_{i,j=1}^{n} E_{ii}\otimes E_{jj}\otimes a_{ij}\in{\rm Mat}_n(\C)\otimes{\rm Mat}_n(\C)\otimes\Her_s(\C)
\end{align*}
where $e_i$ and $E_{ij}$ denote the canonical basis of $\C^n$ and ${\rm Mat}_n(\C)$, respectively. Now let
\begin{equation*}
	\varphi(A):=\operatorname{diag}(A)-\operatorname{col}(A)\operatorname{col}(A)^*\in({\rm Mat}_n(\C)\otimes{\rm Mat}_n(\C)\otimes{\rm Mat}_s(\C))_{\operatorname{her}}.
\end{equation*}
For example for $n=2$ the matrix $\varphi(A)$ looks as follows:
\begin{equation*}
	\begin{pmatrix}
	a_{11}-a_{11}^2 & -a_{11}a_{12} & -a_{11}a_{21} & -a_{11}a_{22}\\
	-a_{12}a_{11} & a_{12}-a_{12}^2 & -a_{12}a_{21} & -a_{12}a_{22}\\
	-a_{21}a_{11} & -a_{21}a_{12} & a_{21}-a_{21}^2 & -a_{21}a_{22}\\
	-a_{22}a_{11} & -a_{22}a_{12} & -a_{22}a_{21} & a_{22}-a_{22}^2 
	\end{pmatrix}
\end{equation*}
For $n\geq 3$ we further define the matrix
\begin{equation*}
	\psi(A):=\sum_{\substack{{i,j,k,l=1}\\{i\neq j,k\neq l}}}^{n} E_{ij}\otimes E_{kl}\otimes\left(-\alpha_nI_s+\beta_na_{ik}+\beta_na_{jl}+\gamma_na_{il}+\gamma_na_{jk}\right)
\end{equation*}
where
\begin{equation*}
	\alpha_n:=\frac{1}{(n-1)(n-2)},\quad \beta_n:=\frac{n-1}{n(n-2)},\quad \gamma_n:=\frac{1}{n(n-2)}.
\end{equation*}
\demo\end{construction}

We let  $\mathcal{Z}^{(n)}$ denote the vector space of all matrices in ${\rm Mat}_n(\C)$ with zeros on the diagonal. Moreover, for $e:=(1,\dots,1)^t\in\C^n$ we define the following subspace of $\mathcal{Z}^{(n)}$:
\begin{equation*}
	\mathcal{Z}_e^{(n)}:=\{z\in\mathcal{Z}^{(n)}\,\vert\,ze=z^*e=0\}.
\end{equation*}

\begin{proposition}
	\label{prop:mconvconstraint}
	\textup{(i)} For all $A\in\mathcal{M}_s^{(n)}$ the following are equivalent:
	\begin{align}
		\exists X\in(\mathcal{Z}^{(n)}\otimes\mathcal{Z}^{(n)}\otimes\operatorname{Mat}_s(\C))_{\operatorname{her}}\colon &\quad \varphi(A)+X\geqslant 0 \label{eq:mconvconstraint}\\
		\exists X\in(\mathcal{Z}_e^{(n)}\otimes\mathcal{Z}_e^{(n)}\otimes\operatorname{Mat}_s(\C))_{\operatorname{her}}\colon &\quad \varphi(A)+\psi(A)+X\geqslant 0 \label{eq:mconvconstraintstrong}
	\end{align}
	\textup{(ii)} For $A\in\operatorname{mconv}\left(\mathcal{P}^{(n)}\right)_s$ the formulas \eqref{eq:mconvconstraint} and \eqref{eq:mconvconstraintstrong} are true.
\end{proposition}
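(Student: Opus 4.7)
The implication \eqref{eq:mconvconstraintstrong} $\Rightarrow$ \eqref{eq:mconvconstraint} is immediate: $\mathcal{Z}_e^{(n)} \otimes \mathcal{Z}_e^{(n)} \subseteq \mathcal{Z}^{(n)} \otimes \mathcal{Z}^{(n)}$ and $\psi(A) \in \mathcal{Z}^{(n)} \otimes \mathcal{Z}^{(n)} \otimes \operatorname{Mat}_s(\mathbb{C})$ by construction, so $X + \psi(A)$ witnesses \eqref{eq:mconvconstraint}. The substantive direction is \eqref{eq:mconvconstraint} $\Rightarrow$ \eqref{eq:mconvconstraintstrong}; the plan is to combine a positivity-forced kernel constraint on $\varphi(A) + X$ with a canonical-representative identity tying $\psi(A)$ to $\varphi(A)$.

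Set $q_e := I_n - \tfrac{1}{n} e e^t$ and $Q := q_e \otimes q_e \otimes I_s$, so that the range of $I - Q$ equals $(\mathbb{C} e \otimes \mathbb{C}^n + \mathbb{C}^n \otimes \mathbb{C} e) \otimes \mathbb{C}^s$. For any $v := e_a \otimes e \otimes \xi$ with $\xi \in \mathbb{C}^s$, the magic-square row/column identities $\sum_k a_{ak} = \sum_j a_{ja} = I_s$ give $v^* \operatorname{diag}(A) v = \xi^* \xi = v^* \operatorname{col}(A) \operatorname{col}(A)^* v$, so $v^* \varphi(A) v = 0$. For any $X \in \mathcal{Z}^{(n)} \otimes \mathcal{Z}^{(n)} \otimes \operatorname{Mat}_s$ one also has $v^* X v = 0$, since in $v^*(E_{ij} \otimes E_{kl} \otimes x)v = \delta_{ia}\delta_{ja}\,\xi^* x \xi$ the support condition $i \neq j$ kills every term. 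The same holds for $v = e \otimes e_b \otimes \xi$. If $\varphi(A) + X \geq 0$, PSD together with Hermiticity then promotes these scalar identities to $(I - Q)(\varphi(A) + X) = 0$.

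The main computational step is the dual identity $(\varphi(A) + \psi(A))(I - Q) = 0$ for every quantum magic square $A$, proved by evaluation on the same generators. I would compute $\varphi(A)(e_a \otimes e \otimes \xi) = -\sum_{j \neq a,\, k} e_j \otimes e_k \otimes a_{jk} \xi$ directly, and $\psi(A)(e_a \otimes e \otimes \xi) = \sum_{i \neq a,\, k} e_i \otimes e_k \otimes \bigl(\sum_{l \neq k} c_{ia,kl}\bigr)\xi$ with $c_{ij,kl} := -\alpha_n I_s + \beta_n(a_{ik}+a_{jl}) + \gamma_n(a_{il}+a_{jk})$; using $\sum_{l \neq k} a_{al} = I_s - a_{ak}$ and $\sum_{l \neq k} a_{il} = I_s - a_{ik}$, the inner sum collapses to $[-(n{-}1)\alpha_n + \beta_n + \gamma_n] I_s + [(n{-}1)\beta_n - \gamma_n] a_{ik} + [(n{-}1)\gamma_n - \beta_n] a_{ak}$. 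The constants $\alpha_n, \beta_n, \gamma_n$ from \cref{con} are precisely the unique solution making this triple equal $(0,1,0)$, whence $\sum_{l \neq k} c_{ia,kl} = a_{ik}$ and $\psi(A)(e_a \otimes e \otimes \xi) = -\varphi(A)(e_a \otimes e \otimes \xi)$. The $e \otimes e_b \otimes \xi$ case is symmetric. This coefficient-matching is the main (if mechanical) obstacle, as it is the only place where the precise definition of $\psi(A)$ is used. Given the two kernel identities, $X' := X - \psi(A)$ satisfies $(I - Q) X' = 0$, i.e. $X'$ is $Q$-supported. A direct bookkeeping argument shows that a $Q$-supported element of $\mathcal{Z}^{(n)} \otimes \mathcal{Z}^{(n)}$ is automatically in $\mathcal{Z}_e^{(n)} \otimes \mathcal{Z}_e^{(n)}$: projecting the image (and, by Hermiticity, the coimage) onto $\mathbb{C} e$ in either factor yields the four row- and column-sum identities $\sum_i x_{ijkl} = \sum_j x_{ijkl} = \sum_k x_{ijkl} = \sum_l x_{ijkl} = 0$ (whenever the opposite pair of indices is unequal, which is already forced by $X' \in \mathcal{Z}^{(n)} \otimes \mathcal{Z}^{(n)}$). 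Then $\varphi(A) + \psi(A) + X' = \varphi(A) + X \geq 0$ witnesses \eqref{eq:mconvconstraintstrong}.

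\textbf{Plan for part (ii).} Since $\mathcal{P}^{(n)}$ is closed under block-diagonal direct sums of quantum permutations, \cref{rem:mconv}(ii) lets me write any $A \in \operatorname{mconv}(\mathcal{P}^{(n)})_s$ as $a_{ij} = v^* u_{ij} v$ for a single $U = (u_{ij}) \in \mathcal{P}_t^{(n)}$ and isometry $v \in \operatorname{Mat}_{t,s}(\mathbb{C})$. By \cref{rem:squares}(i), $\varphi(U)$ already lies in $\mathcal{Z}^{(n)} \otimes \mathcal{Z}^{(n)} \otimes \operatorname{Mat}_t$: the "truly diagonal" blocks $u_{ij} - u_{ij}^2$ vanish because $u_{ij}$ is a projector, and the "single-shared-index" blocks $-u_{ij} u_{il}$ and $-u_{ij} u_{kj}$ vanish by row/column orthogonality. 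A direct block calculation gives
\begin{equation*}
\varphi(A) = (I_{n^2} \otimes v^*)\, \varphi(U)\, (I_{n^2} \otimes v) + (I_{n^2} \otimes v^*) \operatorname{col}(U)(I_t - v v^*) \operatorname{col}(U)^* (I_{n^2} \otimes v),
\end{equation*}
whose second summand is $\geq 0$ because $I_t - vv^* \geq 0$. Since the outer compression $I_{n^2} \otimes v^*(\cdot)I_{n^2} \otimes v$ preserves the block support pattern in $\operatorname{Mat}_n \otimes \operatorname{Mat}_n$, setting $X := -(I_{n^2} \otimes v^*) \varphi(U) (I_{n^2} \otimes v) \in \mathcal{Z}^{(n)} \otimes \mathcal{Z}^{(n)} \otimes \operatorname{Mat}_s$ yields $\varphi(A) + X \geq 0$, which proves \eqref{eq:mconvconstraint}; \eqref{eq:mconvconstraintstrong} then follows from (i).
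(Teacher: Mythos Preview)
Your argument is correct and follows essentially the same route as the paper's proof. In part (i) you use positivity of $\varphi(A)+X$ together with the kernel identity $(\varphi(A)+\psi(A))(I-Q)=0$ to force $X-\psi(A)$ into $\mathcal{Z}_e^{(n)}\otimes\mathcal{Z}_e^{(n)}\otimes\operatorname{Mat}_s(\mathbb C)$; the paper does the same, only applying the positivity argument to $\varphi(A)+\psi(A)+X$ instead of $\varphi(A)+X$ (and simply asserting the kernel identity, which you verify explicitly by matching the coefficients $\alpha_n,\beta_n,\gamma_n$). In part (ii) the paper chooses coordinates so that $v=\binom{I_s}{0}$, writes $u_{ij}=\left(\begin{smallmatrix}a_{ij}&b_{ij}^*\\ b_{ij}&c_{ij}\end{smallmatrix}\right)$, and sets $X:=BB^*-\varphi(A)$ with $B=\operatorname{col}(b_{ij}^*)$; your coordinate-free identity
\[
\varphi(A)=(I_{n^2}\otimes v^*)\varphi(U)(I_{n^2}\otimes v)+(I_{n^2}\otimes v^*)\operatorname{col}(U)(I_t-vv^*)\operatorname{col}(U)^*(I_{n^2}\otimes v)
\]
is exactly the same computation packaged globally, and your $X=-(I_{n^2}\otimes v^*)\varphi(U)(I_{n^2}\otimes v)$ coincides with the paper's $X$ once one unpacks the blocks.
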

\begin{proof}
	(i) By definition we have $\psi(A)\in(\mathcal{Z}^{(n)}\otimes\mathcal{Z}^{(n)}\otimes\operatorname{Mat}_s(\C))_{\operatorname{her}}$. Thus, the implication \eqref{eq:mconvconstraintstrong}$\Rightarrow$\eqref{eq:mconvconstraint} is obvious. For the same reason, \eqref{eq:mconvconstraint} implies the formula
	\begin{equation*}
		\exists X\in(\mathcal{Z}^{(n)}\otimes\mathcal{Z}^{(n)}\otimes\operatorname{Mat}_s(\C))_{\operatorname{her}}\colon\quad \varphi(A)+\psi(A)+X\geqslant 0.
	\end{equation*}
	Hence, for the reverse implication it suffices to show that if $X$ satisfies this formula, then $X\in(\mathcal{Z}_e^{(n)}\otimes\mathcal{Z}_e^{(n)}\otimes\operatorname{Mat}_s(\C))_{\operatorname{her}}$. Using the fact that $A\in\mathcal{M}_s^{(n)}$ one can easily verify that
	\begin{equation*}
		0 = (\varphi(A)+\psi(A))(e\otimes e_i\otimes I_s)
	\end{equation*}
	for $i=1,\dots,n$ ($\psi(A)$ is precisely designed for this to be true). Moreover, from $X\in(\mathcal{Z}^{(n)}\otimes\mathcal{Z}^{(n)}\otimes\operatorname{Mat}_s(\C))_{\operatorname{her}}$ it follows that
	\begin{equation*}
		0=(e\otimes e_i\otimes I_s)^*X(e\otimes e_i\otimes I_s).
	\end{equation*}
	Together we obtain
	\begin{equation*}
		0=(e\otimes e_i\otimes I_s)^*(\varphi(A)+\psi(A)+X)(e\otimes e_i\otimes I_s).
	\end{equation*}
	From this and $\varphi(A)+\psi(A)+X\geqslant 0$ we conclude, for example with the Cauchy--Schwarz inequality, that
	\begin{equation*}
		0=(\varphi(A)+\psi(A)+X)(e\otimes e_i\otimes I_s)=X(e\otimes e_i\otimes I_s).
	\end{equation*}
	Since $X$ is Hermitian, the same is true if we multiply from the other side, and by symmetry it is also true if we swap $e$ and $e_i$. Therefore we obtain $X\in(\mathcal{Z}_e^{(n)}\otimes\mathcal{Z}_e^{(n)}\otimes\operatorname{Mat}_s(\C))_{\operatorname{her}}$.
	
	(ii) Let $A\in\operatorname{mconv}\left(\mathcal{P}^{(n)}\right)_s$. By \cref{rem:mconv} (ii) there is a quantum permutation  matrix $U=(u_{ij})\in\mathcal{P}_t^{(n)}$ and an isometry $v\in{\rm Mat}_{t,s}(\C)$ such that
	\begin{equation*}
		a_{ij}=v^*u_{ij}v
	\end{equation*}
	for all $i,j$. By conjugating the $u_{ij}$ with a suitable unitary matrix if necessary, we may assume $v$ to be of the form
	\begin{equation*}
		v=\begin{pmatrix}
		I_s\\
		0
		\end{pmatrix} .
	\end{equation*}
	Then the $u_{ij}$ have the form
	\begin{equation*}
		u_{ij}=\begin{pmatrix}
		a_{ij} & b_{ij}^*\\
		b_{ij} & c_{ij}
		\end{pmatrix}
	\end{equation*}
	for some matrices $b_{ij}\in{\rm Mat}_{t-s,s}(\C)$ and $c_{ij}\in\Her_{t-s}(\C)$. Since the $u_{ij}$ are projectors, we have $u_{ij}^2=u_{ij}$ and from \cref{rem:squares} (i) we additionally know $u_{ij}u_{ik}=0=u_{ji}u_{ki}$ for $j\neq k$. This implies
	\begin{equation}
		\label{eq:b*brelations}
		\begin{split}
		b_{ij}^*b_{ij} &= a_{ij}-a_{ij}^2\\
		b_{ik}^*b_{ij} &= -a_{ik}a_{ij}\\
		b_{ji}^*b_{ki} &= -a_{ji}a_{ki}.
		\end{split}
	\end{equation}
	From the matrices $b_{ij}^*$ we construct a matrix $B$ in the same way we constructed $\operatorname{col}(A)$ from the $a_{ij}$, i.e.\ $B$ is just the column containing the $b_{ij}^*$. Now let
	\begin{equation*}
		X:=BB^*-\varphi(A).
	\end{equation*}
	Then we have
	\begin{equation*}
		\varphi(A)+X=BB^*\geqslant 0
	\end{equation*}
	and $X\in(\mathcal{Z}^{(n)}\otimes\mathcal{Z}^{(n)}\otimes{\rm Mat}_s(\C))_{\operatorname{her}}$ follows from \eqref{eq:b*brelations}.
\end{proof}

\begin{proof}[Proof of \cref{thm:mconv}]
	We show by induction on $n$ that \eqref{eq:mconvconstraint} cannot hold for all elements of $\mathcal{M}_2^{(n)}$. For $n=3$ let
	\begin{align*}
		a_{11} &:= \frac{1}{3}I_2+\frac{9}{62}\left(
		\begin{array}{cc}
		-\frac{34}{93} & \frac{4}{5}+\frac{2 i}{13} \\
		\frac{4}{5}-\frac{2 i}{13} & \frac{7}{16} \\
		\end{array}
		\right)\\
		a_{12} &:= \frac{1}{3}I_2+\frac{9}{62}\left(
		\begin{array}{cc}
		\frac{5}{6} & \frac{1}{3}-\frac{20 i}{81} \\
		\frac{1}{3}+\frac{20 i}{81} & -\frac{41}{55} \\
		\end{array}
		\right)\\
		a_{21} &:= \frac{1}{3}I_2+\frac{9}{62}\left(
		\begin{array}{cc}
		-\frac{2}{3} & -\frac{25}{92}-\frac{3 i}{7} \\
		-\frac{25}{92}+\frac{3 i}{7} & \frac{1}{34} \\
		\end{array}
		\right)\\
		a_{22} &:= \frac{1}{3}I_2+\frac{9}{62}\left(
		\begin{array}{cc}
		\frac{29}{30} & \frac{6}{35}-i \\
		\frac{6}{35}+i & -\frac{5}{8} \\
		\end{array}
		\right)
	\end{align*}
	and define $a_{13},a_{23},a_{31},a_{32},a_{33}$ such that $A=(a_{ij})$ is an element of $\mathcal{M}_2^{(3)}$. By \cref{prop:mconvconstraint}~(i) it is enough to show that this specific $A$ does not satisfy \eqref{eq:mconvconstraintstrong}. Note that the space $\mathcal{Z}_e^{(3)}$ is one-dimensional with Hermitian generator
	\begin{equation*}
	g:=\begin{pmatrix}
	0 & i & -i\\
	-i & 0 & i\\
	i & -i & 0
	\end{pmatrix}.
	\end{equation*}
	Thus, it remains to be shown that the formula
	\begin{equation*}
	\exists x\in\Her_2(\C)\colon\quad \varphi(A)+\psi(A)+g\otimes g\otimes x  \geqslant 0
	\end{equation*}
	is false. Now let
	\begin{align*}
	B_0:=\varphi(A)+\psi(A),\quad& B_1:=g\otimes g\otimes \begin{pmatrix}
	1 & 0\\
	0 & 0
	\end{pmatrix},\quad B_2:=g\otimes g\otimes \begin{pmatrix}
	0 & 1\\
	1 & 0
	\end{pmatrix},\\
	& B_3:=g\otimes g\otimes \begin{pmatrix}
	0 & -i\\
	i & 0
	\end{pmatrix},\quad B_4:=g\otimes g\otimes \begin{pmatrix}
	0 & 0\\
	0 & 1
	\end{pmatrix}.
	\end{align*}
	Then it suffices to find a matrix $Y\in(\mathrm{Mat}_3(\C)\otimes\mathrm{Mat}_3(\C)\otimes\mathrm{Mat}_2(\C))_{\operatorname{her}}$ such that
	\begin{align*}
	Y &\geqslant 0\\
	\operatorname{tr}(YB_0) &< 0\\
	\operatorname{tr}(YB_j) &= 0\quad\quad j=1,2,3,4.
	\end{align*}
	Such a matrix exists, which can be verified by the attached Mathematica\texttrademark{} file \emph{Counterexample.nb}. Note that all computations are over $\mathbb Q[i]$ and thus exact.

	Now let $n>3$. By the induction hypothesis there is an element $A\in\mathcal{M}_2^{(n-1)}$ that does not satisfy \eqref{eq:mconvconstraint}. 
	We claim that the same is true for
	\begin{equation*}
		A^{\prime}=\begin{pmatrix}
		A & 0\\
		0 & I_2
		\end{pmatrix}\in\mathcal{M}_2^{(n)}.
	\end{equation*}
	Let
	\begin{equation*}
		v=\begin{pmatrix}
		I_{n-1}\\
		0
		\end{pmatrix}\in{\rm Mat}_{n,n-1}(\C)
	\end{equation*}
	and let $X^{\prime}\in(\mathcal{Z}^{(n)}\otimes\mathcal{Z}^{(n)}\otimes{\rm Mat}_2(\C))_{\operatorname{her}}$ be an arbitrary element. Then we have
	\begin{equation*}
		X:=(v\otimes v\otimes I_2)^*X^{\prime}(v\otimes v\otimes I_2)\in(\mathcal{Z}^{(n-1)}\otimes\mathcal{Z}^{(n-1)}\otimes{\rm Mat}_2(\C))_{\operatorname{her}}
	\end{equation*}
	and
	\begin{equation*}
		(v\otimes v\otimes I_2)^*(\varphi(A^{\prime})+X^{\prime})(v\otimes v\otimes I_2)=\varphi(A)+X\not\geqslant 0.
	\end{equation*}
	It follows that
	\begin{equation*}
		\varphi(A^{\prime})+X^{\prime}\not\geqslant 0.
	\end{equation*}
	Since $X^{\prime}$ was arbitrary, the claim follows.
\end{proof}

In \cref{app:alternative} we provide an alternative proof of \cref{thm:mconv} which does not use this explicit example, but gives no information on the level at which the difference appears. 

\begin{definition}\label{def:arveson} Let $R=\bigcup_{s\geq 1} R_s$ be a matrix convex set (as in  \cref{def:mconv}). Then
$A =(a_{ij})_{i,j}\in R_s$ is an \emph{Arveson extreme point} of $R$ if whenever $A$ dilates to $B=(b_{ij})_{i,j}\in R_t$, 
then each $a_{ij}$ is a direct summand of $b_{ij} $ up to a simultaneous unitary conjugation. 
\end{definition}

In words, this means that $A$ only admits ``trivial" dilations, that is, of the form $$b_{ij} = \left(\begin{array}{cc}a_{ij} & 0 \\0 & c_{ij}\end{array}\right),$$  see also \cite{Eve19} for more details. 

\begin{corollary}\label{cor:arv}
Every quantum permutation matrix in $\mc P^{(n)}$ is an Arveson extreme point 
of the matrix convex set $\mc M^{(n)}$ of all quantum  magic squares. 
For $n\geq 3$ not every Arveson extreme point of $\mc M^{(n)}$ is a quantum permutation matrix.
\end{corollary}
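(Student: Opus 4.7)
The corollary has two parts, which I would handle separately. The first is essentially a direct application of Lemma \ref{lem:split}; the second reduces to a contradiction with Theorem \ref{thm:mconv} via a standard spanning theorem for Arveson extreme points.

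\textbf{First part.} Suppose $A = (a_{ij}) \in \mathcal{P}^{(n)}_s$ dilates to some $B = (b_{ij}) \in \mathcal{M}^{(n)}_t$ via an isometry $v$, so $a_{ij} = v^* b_{ij} v$ for all $i,j$. Since $B$ is a quantum magic square, each row sums to $I_t$ with positive semidefinite summands, which forces $0 \leqslant b_{ij} \leqslant I_t$. Combined with the fact that each $a_{ij}$ is a projector, Lemma \ref{lem:split} applied entrywise produces the block decomposition
$$b_{ij} = \begin{pmatrix} a_{ij} & 0 \\ 0 & p_{ij} \end{pmatrix}$$
relative to $\mathbb{C}^t = v\mathbb{C}^s \oplus (v\mathbb{C}^s)^\perp$. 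Because this decomposition depends only on $v$ and not on the index pair $(i,j)$, a single unitary conjugation puts every $b_{ij}$ simultaneously into the displayed block form, which is precisely what Definition \ref{def:arveson} requires.

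\textbf{Second part.} For $n \geq 3$ I would argue by contradiction, invoking the general fact that a closed, level-wise compact matrix convex set, such as the free spectrahedron $\mathcal{M}^{(n)}$, coincides with the matrix convex hull of its Arveson extreme points (Davidson--Kennedy in the abstract operator-system setting, and Evert--Helton--Klep--McCullough for free spectrahedra; see also \cite{Eve19}). If every Arveson extreme point of $\mathcal{M}^{(n)}$ happened to be a quantum permutation matrix, this spanning theorem would force
$$\mathcal{M}^{(n)} \subseteq \operatorname{mconv}\bigl(\mathcal{P}^{(n)}\bigr),$$
contradicting Theorem \ref{thm:mconv}.

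The only genuine subtlety is to line up the precise form of the spanning theorem with the notion of matrix convex hull used in the paper, and if necessary to handle closures: this is straightforward because the obstruction \eqref{eq:mconvconstraint} in the proof of Theorem \ref{thm:mconv} is a closed LMI condition at each matrix level, so the explicit counterexample constructed there lies outside $\operatorname{mconv}(\mathcal{P}^{(n)})$ even after passing to closures. With that checked, the proof is a one-line contradiction. The rest is routine, and the genuinely new input is Lemma \ref{lem:split} in the first part and Theorem \ref{thm:mconv} in the second.
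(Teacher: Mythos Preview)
Your proposal is correct and follows essentially the same approach as the paper: the first part is the paper's argument verbatim (apply Lemma~\ref{lem:split} entrywise, using $0\leqslant b_{ij}\leqslant I_t$ and that each $a_{ij}$ is a projector), and the second part is the same contradiction via the spanning result of \cite{Eve19} combined with Theorem~\ref{thm:mconv}. Your extra remark about closures is not needed, since the Evert--Helton result gives the spanning without passing to closures for compact free spectrahedra, but it does no harm.
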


\begin{proof} Let $U=\left(u_{ij}\right)_{i,j}\in\mc P^{(n)}$ and assume
$U=\left(v^*a_{ij}v\right)_{i,j}$ for some quantum magic square $A=\left(a_{ij}\right)_{i,j}\in\mc M^{(n)}$ and an isometry $v$. Since $A$ is a quantum magic square  we have $0\leqslant a_{ij}\leqslant I$  for all $i,j$. By \cref{lem:split}, each $a_{ij}$ splits off $u_{ij}$ as a direct summand, and thus $U$ is an Arveson extreme point.

Note that $\mc M^{(n)}$ is a compact free spectrahedron, i.e.\ definable by a linear matrix inequality (even by finitely many linear inequalities with scalar coefficients). In  \cite{Eve19} it is shown that the matrix convex hull of the absolute extreme points, and in particular of the Arveson extreme points, is the full spectrahedron. So in view of \cref{thm:mconv} not every Arveson extreme point (not even every absolute extreme point) can be a quantum permutation matrix. 
\end{proof}

%%======================================================
\appendix

\section{Alternative proof of Theorem \ref{thm:mconv}} 
\label{app:alternative}

In the proof of \cref{thm:mconv} we gave an explicit element $A\in\mathcal{M}_s^{(3)}$ that does not satisfy the constraint \eqref{eq:mconvconstraint} from \cref{prop:mconvconstraint}. In this appendix we give a conceptual proof that such an element $A$ exists. However, in contrast to the original proof this does not show that we can choose $s=2$.
Recall from \cref{con} the definition of $\varphi(A)$ and $\psi(A)$. 

\begin{proposition}
	\label{prop:mconvweak}
	There exists a natural number $s$ and an element $A\in\mathcal{M}_s^{(3)}$ such that
	\begin{equation*}
		\forall X\in(\mathcal{Z}^{(3)}\otimes\mathcal{Z}^{(3)}\otimes{\rm Mat}_s(\C))_{\operatorname{her}}\colon\quad \varphi(A)+X\not\geqslant 0.
	\end{equation*}
\end{proposition}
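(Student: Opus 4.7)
My plan is a proof by contradiction. Assume that for every $s\geq 1$ and every $A\in\mathcal{M}_s^{(3)}$ the constraint \eqref{eq:mconvconstraint} admits some witness $X$. I aim to derive from this that every quantum magic square of size $3$ dilates to a quantum permutation matrix, i.e.\ $\mathcal{M}^{(3)}={\rm mconv}(\mathcal{P}^{(3)})$. Since for $n=3$ one has $\mathcal{P}^{(3)}=\mathcal{CP}^{(3)}$, this would contradict \cref{cor:nodil}(i).

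The heart of the plan is to upgrade \cref{prop:mconvconstraint}(ii) to an equivalence for $n=3$: I would show that, conversely, if $\varphi(A)+X\geqslant 0$ for some $X\in(\mathcal{Z}^{(3)}\otimes\mathcal{Z}^{(3)}\otimes\operatorname{Mat}_{s}(\C))_{\operatorname{her}}$, then $A\in{\rm mconv}(\mathcal{P}^{(3)})_{s}$. To this end, factor $\varphi(A)+X=BB^{*}$ and write $B$ as a column of blocks $(b_{ij}^{*})_{i,j}$. Because $X$ vanishes in every ``one-off-diagonal'' block position, the factorization $BB^{*}=\varphi(A)+X$ forces exactly the relations \eqref{eq:b*brelations} on the $b_{ij}$. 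My strategy is then to reverse the construction in the proof of \cref{prop:mconvconstraint}(ii): find completing blocks $c_{ij}$ so that the matrices
\begin{equation*}
u_{ij}=\begin{pmatrix} a_{ij} & b_{ij}^{*}\\ b_{ij} & c_{ij}\end{pmatrix}
\end{equation*}
form a quantum permutation matrix on the enlarged space, thereby exhibiting a dilation of $A$ to an element of $\mathcal{P}^{(3)}$.

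The main obstacle is the existence of suitable $c_{ij}$. These must simultaneously realize the remaining blocks of $u_{ij}^{2}=u_{ij}$, the PVM relations $u_{ij}u_{ik}=0$ for $j\neq k$ and $u_{ji}u_{ki}=0$ for $i\neq k$, and the normalization conditions $\sum_{j}c_{ij}=\sum_{i}c_{ij}=I$; the data on $b_{ij}$ encodes only the $(1,1)$-blocks of these equations, so a joint solution is not a priori guaranteed. A more robust route would be to use the intrinsic characterization in \cref{thm:nodil}(i)(c): since for $n=3$ the universal algebra of quantum permutations collapses to the commutative algebra $\C^{S_{3}}$, membership in ${\rm mconv}(\mathcal{P}^{(3)})_{s}$ is equivalent to the existence of a positive unital $*$-linear map $\phi\colon\C^{S_{3}}\to\operatorname{Mat}_{s}(\C)$ with $\phi(f_{ij})=a_{ij}$. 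Using the auxiliary matrix $B$ as the ``off-diagonal matrix elements of a Stinespring-type dilation,'' I would construct $\phi$ directly via a GNS- or Arveson-extension-type argument, closing the loop.
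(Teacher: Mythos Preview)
Your overall architecture matches the paper's: assume the constraint \eqref{eq:mconvconstraint} is satisfiable, factor $\varphi(A)+X=BB^{*}$, read off blocks $b_{ij}$ obeying \eqref{eq:b*brelations}, and try to build a dilation that yields a contradiction with \cref{cor:nodil}. But the step you flag as the ``main obstacle'' is exactly where your proposal stalls, and the paper handles it by a genuinely different (and weaker) move that you are missing.

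You attempt to dilate $A$ \emph{all the way} to a quantum permutation matrix, i.e.\ to produce $c_{ij}$ so that the $u_{ij}$ are projections satisfying the full PVM relations in every row and column. This amounts to proving the converse of \cref{prop:mconvconstraint}(ii) for $n=3$, and you offer no mechanism for it: the data \eqref{eq:b*brelations} only pins down the $(1,1)$-blocks of the relations $u_{ij}^{2}=u_{ij}$ and $u_{ij}u_{ik}=0$, and neither a direct construction of $c_{ij}$ nor the ``GNS/Arveson-extension'' hint explains how to manufacture the missing $(1,2)$- and $(2,2)$-block identities. As stated, this is a gap, not a proof.

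The paper sidesteps this entirely. It does not try to show the constraint characterizes ${\rm mconv}(\mathcal{P}^{(3)})$; instead it picks $A$ to be an \emph{Arveson extreme point} of $\mathcal{M}^{(3)}$ that is not a quantum permutation (for $n=3$ such a point exists already by \cref{cor:nodil}(i) together with $\mathcal{P}^{(3)}=\mathcal{CP}^{(3)}$ and the Evert--Helton spanning result, so there is no circularity with \cref{thm:mconv}). Then a single \emph{nontrivial one-step dilation} $A'\in\mathcal{M}_{s+1}^{(3)}$ suffices for a contradiction. Concretely, from $b_{11}\neq 0$ one picks $v$ with $v^{*}b_{11}^{*}b_{11}v=1$, uses the row/column vanishing $\sum_{j}b_{ij}=\sum_{i}b_{ij}=0$ (forced by $A\in\mathcal{M}^{(3)}$) and the Moore--Penrose inverses $p_{ij}$ of $a_{ij}^{1/2}$ to set
\[
a_{ij}':=\begin{pmatrix} a_{ij} & b_{ij}^{*}b_{11}v\\ v^{*}b_{11}^{*}b_{ij} & v^{*}b_{11}^{*}b_{ij}p_{ij}^{2}b_{ij}^{*}b_{11}v+c_{ij}\end{pmatrix},
\]
and chooses nonnegative scalars $c_{ij}$ so that $A'=(a_{ij}')\in\mathcal{M}_{s+1}^{(3)}$. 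The point is that here the $c_{ij}$ only have to make $A'$ a quantum \emph{magic square}, not a quantum permutation; that is a vastly weaker requirement and is verified by explicit computation. Since the off-diagonal block $b_{11}^{*}b_{11}v$ is nonzero, $A$ is not a direct summand of $A'$, contradicting Arveson extremality.

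In short: your plan aims for a much stronger (and unproven) implication than is needed. The missing idea is to replace ``dilate to $\mathcal{P}^{(3)}$'' by ``dilate nontrivially by one dimension inside $\mathcal{M}^{(3)}$'' and to feed this against an Arveson extreme point.
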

\begin{proof}
	By \cref{cor:arv} there is an Arveson extreme point $A\in\mathcal{M}_s^{(3)}$ of $\mathcal{M}^{(3)}$ that is not a quantum permutation matrix. Note that in the case $n=3$ this does not involve \cref{thm:mconv}  but already follows from \cref{cor:nodil}.
Without loss of generality we may assume that
	\begin{equation*}
		a_{11}-a_{11}^2\neq 0.
	\end{equation*}
	We show by contradiction that $A$ satisfies the claim. So assume that there exists an $X\in(\mathcal{Z}^{(3)}\otimes\mathcal{Z}^{(3)}\otimes{\rm Mat}_s(\C))_{\operatorname{her}}$ such that
	\begin{equation*}
		\varphi(A)+X\geqslant 0.
	\end{equation*}
	Then we can find a matrix $B$ such that
	\begin{equation*}
		\varphi(A)+X=BB^*.
	\end{equation*}
	We write $B$ as
	\begin{equation*}
		B=\sum_{i,j=1}^{3} e_i\otimes e_j\otimes b_{ij}^*
	\end{equation*}
	with matrices $b_{ij}\in\mathrm{Mat}_{t,s}(\C)$ for some natural number $t$ (cf.\ proof of \cref{prop:mconvconstraint}). It follows that
	\begin{equation*}
		b_{11}^*b_{11}=a_{11}-a_{11}^2\neq 0.
	\end{equation*}
	In particular, we can find a vector $v\in\C^s$ such that
	\begin{equation*}
		v^*b_{11}^*b_{11}^{}v=1.
	\end{equation*}
	On the other hand we obtain from $A\in\mathcal{M}^{(3)}$ that
	\begin{align*}
		(b_{i1}+b_{i2}+b_{i3})^*(b_{i1}+b_{i2}+b_{i3}) &= \sum_{k,l=1}^{3}b_{ik}^*b_{il}
		= \sum_{k,l=1}^{3}(\delta_{kl}a_{ik}-a_{ik}a_{il})\\
		&= \sum_{k=1}^{3}(a_{ik}-a_{ik}\cdot I_s)
		= 0.
	\end{align*}
	This implies
	\begin{equation}
		\label{eq:bsums}
		b_{i1}+b_{i2}+b_{i3}=b_{1j}+b_{2j}+b_{3j}=0
	\end{equation}
	for all $i,j=1,2,3$. For $i,j=1,2,3$ let $p_{ij}$ be the Moore--Penrose inverse of $a_{ij}^{\frac{1}{2}}$. Since the $a_{ij}^{\frac{1}{2}}$ are Hermitian, it is well known that $p_{ij}$ is Hermitian as well and that
	\begin{equation}
		\label{eq:pa=ap}
		p_{ij}a_{ij}^{\frac{1}{2}}=a_{ij}^{\frac{1}{2}}p_{ij}
	\end{equation}
	is the orthogonal projection onto $\ker(a_{ij})^{\perp}$. Since $b_{ij}^*b_{ij}=a_{ij}-a_{ij}^2$, this gives
	\begin{equation}
		\label{eq:bpa=b}
		b_{ij}p_{ij}a_{ij}^{\frac{1}{2}}=b_{ij}.
	\end{equation}
	
	Our goal is now to show that there are real numbers $c_{ij}\geq 0$ such that the matrix $A^{\prime}=(a_{ij}^{\prime})$ with
	\begin{equation*}
		a_{ij}^{\prime}:=\begin{pmatrix}
		a_{ij} & b_{ij}^*b_{11}v\\
		v^*b_{11}^*b_{ij} & v^*b_{11}^*b_{ij}p_{ij}^2b_{ij}^*b_{11}v+c_{ij}
		\end{pmatrix}\in{\rm Her}_{s+1}(\C)
	\end{equation*}
	is an element of $\mathcal{M}_{s+1}^{(3)}$. If that is the case, then $A$ cannot be an Arveson extreme point of $\mathcal{M}^{(3)}$ since $b_{11}^*b_{11}^{}v\neq 0$ by our construction. So we arrive at a contradiction. Thus, it remains to be shown that $A^{\prime}\in\mathcal{M}_{s+1}^{(3)}$, i.e.\ that $a_{ij}^{\prime}\geqslant 0$ and that the sum of the $a_{ij}^{\prime}$ in each row and each column is $I_{s+1}$. For $a_{ij}^{\prime}\geqslant 0$ we compute
	\begin{align*}
		0 &\leqslant \begin{pmatrix}
		I_s\\
		v^*b_{11}^*b_{ij}p_{ij}^2
		\end{pmatrix}a_{ij}\begin{pmatrix}
		I_s & p_{ij}^2b_{ij}^*b_{11}v
		\end{pmatrix}+\begin{pmatrix}
		0 & 0\\
		0 & c_{ij}
		\end{pmatrix}\\
		&= \begin{pmatrix}
		a_{ij} & a_{ij}p_{ij}^2b_{ij}^*b_{11}v\\
		v^*b_{11}^*b_{ij}p_{ij}^2a_{ij} & v^*b_{11}^*b_{ij}p_{ij}^2a_{ij}p_{ij}^2b_{ij}^*b_{11}v+c_{ij}
		\end{pmatrix}\\
		&\stackrel{\eqref{eq:pa=ap}}{=} \begin{pmatrix}
		a_{ij} & (a_{ij}^{\frac{1}{2}}p_{ij})^2b_{ij}^*b_{11}v\\
		v^*b_{11}^*b_{ij}(p_{ij}a_{ij}^{\frac{1}{2}})^2 & v^*b_{11}^*b_{ij}p_{ij}a_{ij}^{\frac{1}{2}}p_{ij}^2a_{ij}^{\frac{1}{2}}p_{ij}b_{ij}^*b_{11}v+c_{ij}
		\end{pmatrix}\\
		&\stackrel{\eqref{eq:bpa=b}}{=} a_{ij}^{\prime}.
	\end{align*}
	From $A\in\mathcal{M}^{(3)}$ and \eqref{eq:bsums} we obtain the condition on the sums for all but the lower right entry. For the lower right entry we first compute
	\begin{align*}
		\left(\sum_{k=1}^{3}b_{ik}p_{ik}^2b_{ik}^*\right)^2 &= \sum_{k,l=1}^{3} b_{ik}p_{ik}^2b_{ik}^*b_{il}p_{il}^2b_{il}^*\\
		&= \sum_{k,l=1}^{3} b_{ik}p_{ik}^2(\delta_{kl}a_{ik}-a_{ik}a_{il})p_{il}^2b_{il}^*\\
		&\stackrel{\eqref{eq:pa=ap}}{=} \left(\sum_{k=1}^{3} b_{ik}p_{ik}a_{ik}^{\frac{1}{2}}p_{ik}^2a_{ik}^{\frac{1}{2}}p_{ik}b_{ik}^*\right)-\left(\sum_{k,l=1}^{3}b_{ik}(p_{ik}a_{ik}^{\frac{1}{2}})^2(a_{il}^{\frac{1}{2}}p_{il})^2b_{il}^*\right)\\
		&\stackrel{\eqref{eq:bpa=b}}{=} \left(\sum_{k=1}^{3} b_{ik}p_{ik}^2b_{ik}^*\right)-\left(\sum_{k,l=1}^{3}b_{ik}b_{il}^*\right) %\\
		%&
		\stackrel{\eqref{eq:bsums}}{=} \sum_{k=1}^{3} b_{ik}p_{ik}^2b_{ik}^*.
	\end{align*}
	It follows that
	\begin{equation*}
		\sum_{k=1}^{3} b_{ik}p_{ik}^2b_{ik}^*\leqslant I_t.
	\end{equation*}
	Since $v^*b_{11}^*b_{11}v=1$, we therefore get
	\begin{align*}
		s_{i*} &:= \sum_{j=1}^{3} v^*b_{11}^*b_{ij}p_{ij}^2b_{ij}^*b_{11}v\leq 1\\
		s_{*j} &:= \sum_{i=1}^{3} v^*b_{11}^*b_{ij}p_{ij}^2b_{ij}^*b_{11}v\leq 1\\
		s &:= \sum_{j=1}^{3} (1-s_{*j})=\sum_{i=1}^{3} (1-s_{i*})\geq 0.
	\end{align*}
	So if we choose
	\begin{equation*}
		c_{ij}:=\begin{cases}
		\begin{array}{cl}
		0 & \text{ , if $s=0$}\\
		\displaystyle\frac{(1-s_{i*})(1-s_{*j})}{s} & \text{ , else}
		\end{array}
		\end{cases}
	\end{equation*}
	then we have $c_{ij}\geq 0$ and the condition on the sums is satisfied.
\end{proof}

\section*{Acknowledgements}
TD and TN acknowledge funding through FWF project P 29496-N35 (free semialgebraic geometry and convexity). 
We thank the participants of the workshop ``Analytical and combinatorial aspects of quantum information theory'' (Edinburgh, September 2019) for stimulating discussions.

\newcommand{\etalchar}[1]{$^{#1}$}

\end{document}